\newcolumntype{C}{>{\centering\arraybackslash}X} 
\title{Allpass Feedback Delay Networks}
\author{Sebastian J. Schlecht,~\IEEEmembership{Senior Member,~IEEE}
        }
\begin{document} 

\maketitle

\begin{sloppy}

\begin{abstract}
In the 1960s, Schroeder and Logan introduced delay line-based allpass filters, which are still popular due to their computational efficiency and versatile applicability in artificial reverberation, decorrelation, and dispersive system design. In this work, we extend the theory of allpass systems to any arbitrary connection of delay lines, namely feedback delay networks (FDNs). We present a characterization of uniallpass FDNs, i.e., FDNs, which are allpass for an arbitrary choice of delays. Further, we develop a solution to the completion problem, i.e., given an FDN feedback matrix to determine the remaining gain parameters such that the FDN is allpass. Particularly useful for the completion problem are feedback matrices, which yield a homogeneous decay of all system modes. Finally, we apply the uniallpass characterization to previous FDN designs, namely, Schroeder's series allpass and Gardner's nested allpass for single-input, single-output systems, and, Poletti's unitary reverberator for multi-input, multi-output systems and demonstrate the significant extension of the design space.
\end{abstract}

\begin{IEEEkeywords}
Filter Design; Allpass Filter; Feedback Delay Networks; SISO; MIMO; Delay State Space
\end{IEEEkeywords}

\section{Introduction}
\label{sec:intro}


\IEEEPARstart{A}{}
llpass filters preserve the signal's energy and only alter the signal phase \cite{Regalia:1988cj}. \schlecsn{Schroeder and Logan generalized the first-order allpass filter replacing the single delay with a delay line \cite{Schroeder:1961ke}.
A decade later, Gerzon generalized delay line-based filters, e.g., feedback comb filters, to feedback delay networks (FDNs) \cite{Gerzon:1971tu} and the single-input, single-output (SISO) allpass structure to multi-input, multi-output (MIMO) allpass networks \cite{Gerzon:1976fm}.}


\schlecsn{FDNs generalize the well-known state space representation by replacing single time steps with different vector time steps, see Fig.~\ref{fig:mimoFDN}.} 
FDNs have well-established system properties such as losslessness and stability \cite{Rocchesso:1997fv,Schlecht:2017jt}, decay control \cite{Jot:1991tq, Prawda:2019tq}, impulse response density \cite{DeSena:2015bb,Schlecht:2017il}, and, modal distribution \cite{Schlecht:2019uj}. 
SISO allpass FDNs can be \schlecsn{composed} from simple allpass filters in series \cite{Schroeder:1961ke, Schlecht:2019sa} or by nesting \cite{Gardner:1992ks}. 
Rocchesso and Smith also suggested an almost allpass FDN with equal delays in \cite[Th.~2]{Rocchesso:1997fv}. MIMO allpass filters can be similarly generated from simple unitary building blocks \cite{Gerzon:1976fm, Vaidyanathan:1989} or by generalizing the allpass lattice structure \cite{Poletti:1995tq}.

Both SISO and MIMO allpass FDNs were applied to a wide range of roles including: 1) increasing the echo density as preprocessing to an artificial reverberator \cite{Schroeder:1961ke,Valimaki:2012jv}; 2) increasing echo density of in the feedback loop of reverberators \cite{Vaananen:1997wj,Lokki:2001ly,Schlecht:2015hi,Werner.2020.Energy}; 3) decorrelation for widening the auditory image of a sound source \cite{Kendall:1995be,Abel:2019aa,Gribben:2020}; 4) as reverberator in electro-acoustic reverberation enhancement systems \cite{Lokki:2001ly,Poletti:1995tq,Poletti:2004hh,Schlecht:2016ta}; 5) linear dynamic range reduction \cite{Parker:2013fy, Belloch:2014} ; and 6) dispersive system design \cite{Abel:2006ux, Valimaki:2010wc, Parker:2011fn}. In the broader context of control theory, allpass FDNs are strongly related to Schur diagonal stability \cite{Kaszkurewicz:2000}, e.g., stability properties of asynchronous networks. \schlecsn{The characterization of allpass matrix-valued rational functions is closely connected to spectral factorization \cite{Baggio:2015.Factorization, Baggio:2017.SpectralDensity} and the notion of balanced realization of state space filters \cite{Regalia:1987fo,Mullis:1976.Roundoff,Hanzon:2000}.}


\begin{figure}[!tb]
\centering
\includegraphics[width=0.5\textwidth]{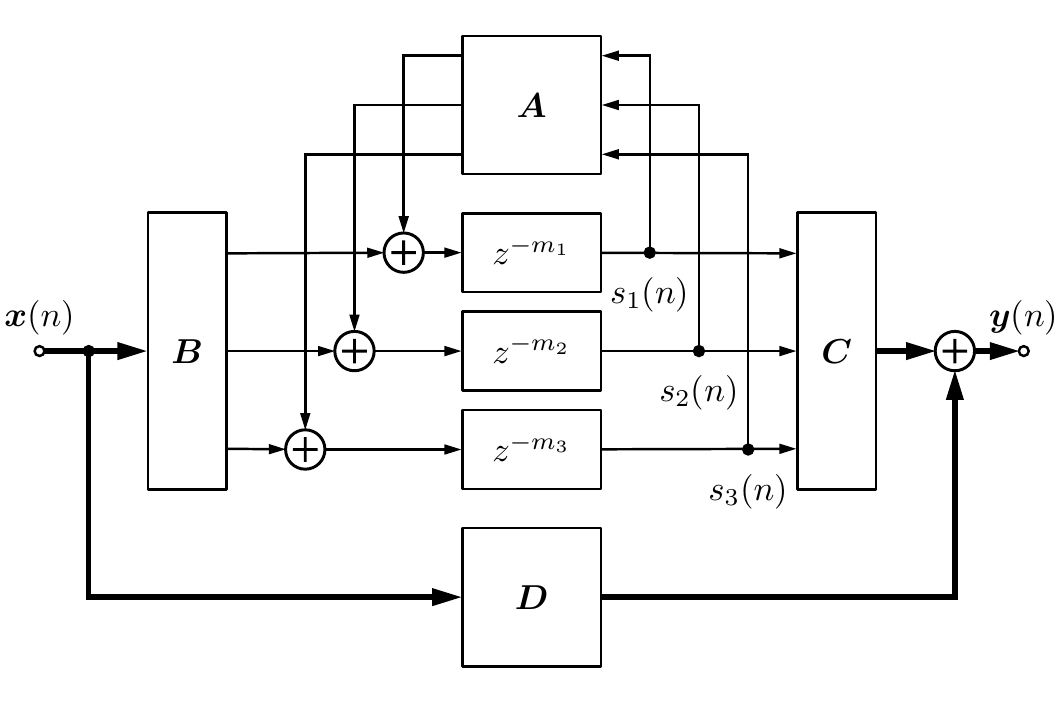}
\caption{MIMO feedback delay network (FDN) with three delay lines, i.e., $\matSize = 3$ and feedback matrix $\Fbm$. Thick lines indicate multiple channels, while thin lines indicate individual channels.}
\label{fig:mimoFDN}
\end{figure}

In this work, we extend the theory of allpass FDNs for both SISO and MIMO. In particular, we study \emph{uniallpass}\footnote{The term \emph{uniallpass} is introduced here with similar motivation as \emph{unilossless} feedback matrices in \cite{Schlecht:2017jt} which yields lossless FDNs regardless of delay lengths.} FDNs, i.e., FDNs, which are allpass for arbitrary delay lengths. While not all allpass FDNs are uniallpass \schlecsn{, e.g., see example in Section~\ref{sec:allpassNotUniallpass},} the more straightforward design criterion significantly extends practical filter structures.

The feedback matrix determines many filter properties of the FDN. Thus, it is often desirable to first design the feedback matrix and subsequently choose the input, output, and direct gains such that the resulting FDN is allpass. We refer to this procedure as the \emph{completion} problem. We call feedback matrices which have a solution to the completion problem as being \emph{allpass admissible}. A particularly useful class of feedback matrices are lossless mixing matrices in conjunction with diagonal delay-proportional absorption matrices. They result in \emph{homogeneous decay} of the impulse response, i.e., all system eigenvalues have the same magnitude \cite{Jot:1991tq}. The main contributions of this work are
\begin{itemize}
	\item \schlecsn{Improved sufficient condition for an FDN to be stable (Theorem~\ref{th:stableFDN}) in Section~\ref{sec:allpassFDN}}
	\item Sufficient and necessary conditions for SISO and MIMO FDNs to be uniallpass (Theorems~\ref{th:allpassFDN} and \ref{th:PM} in Section~\ref{sec:allpassFDN})
	\item Characterization of admissible feedback matrices in uniallpass FDN (Section~\ref{sec:admissible})
	\item Completion algorithms for uniallpass SISO and MIMO FDNs (Section~\ref{sec:generalCompletion})
	\item Characterization of uniallpass FDNs with homogeneous decay (Section~\ref{sec:homogeneous})
	\item Embedding of previous designs in the proposed characterization (Section~\ref{sec:application}). 
\end{itemize}
This work extends the design space of delay line-based allpass filters from a handful of known structures to a freely parametrizable extensive class. In particular, the solution of the completion problem allows to combine feedback matrix design with the allpass property and potentially improves application designs mentioned above. \schlecsn{A MATLAB implementation of all plots, examples and the completion algorithm are included in the FDN toolbox \cite{Schlecht.2020.FDNTB}\footnote{\protect{\url{https://github.com/SebastianJiroSchlecht/fdnToolbox}}}.}

 





\section{Problem Statement and Prior Art}
\label{sec:priorArt}

This section introduces FDN and allpass prior art and reviews a classic theorem on allpass state space systems.

\subsection{MIMO Feedback Delay Network}
\label{sec:MIMOFDN}

The MIMO FDN is given in the discrete-time domain by the difference equation in delay state space form \cite{Rocchesso:1997fv}, see Fig.~\ref{fig:mimoFDN}, 
\begin{equation}
\begin{aligned}
	\Outsignal(n) &= \Outgains \Ssstate(n) + \Directgains \, \Insignal(n), \\
	\Ssstate(n+\Delay) &= \Fbm \, \Ssstate(n) + \Ingains \, \Insignal(n),
\label{eq:FDNTimeDomain}
\end{aligned}
\end{equation}
where $\Insignal(n)$ and $\Outsignal(n)$ are the $\numInput \times 1$ input and $\numOutput \times 1$ output vectors at time sample $n$, respectively. The FDN dimension $\matSize$ is the number of delay lines. The FDN consists of the $\matSize \times \matSize$ \emph{feedback matrix} $\Fbm$, the $\matSize \times \numInput$ \emph{input gain} matrix $\Ingains$, the $\numOutput \times \matSize$  \emph{output gain} matrix $\Outgains$ and the $\numOutput \times \numInput$ \emph{direct gain} matrix $\Directgains$. The lengths of the $\matSize$ delay lines in samples are given by the vector $\Delay = [\delay_1, \dots, \delay_\matSize]$. The $\matSize \times 1$ vector $\Ssstate(n)$ denotes the delay-line outputs at time $n$. The vector argument notation $\Ssstate(n + \Delay)$ abbreviates the vector $[\ssstate_1(n+\delay_1), \dots, \ssstate_\matSize(n+\delay_\matSize)]$. \schlecsn{We focus with our results on FDNs with equal input and output channels, i.e., $\numIO = \numInput = \numOutput$ and real-valued filter coefficients.} We refer to an FDN where the number of delay lines is equal to the input and output channels as \emph{full MIMO}, i.e., $\numIO = \matSize$. A SISO FDN has $\numIO = 1$, which is emphasized by using vectors and scalars $\Ingain$, $\Outgain$ and $\directgain$ instead of matrices $\Ingains$, $\Outgains$ and $\Directgains$.

The $\numIO \times \numIO$ transfer function matrix of an FDN in the z-domain \cite{Rocchesso:1997fv} corresponding to \eqref{eq:FDNTimeDomain} is 
\begin{equation}
	\TF(z) = \Outgains \invp{\stdDelayArg{\inv{z}} - \Fbm} \Ingains + \Directgains,
	\label{eq:FDNtrans}
\end{equation}
where $\stdDelay =\diag{[z^{-\delay_1}, z^{-\delay_2},\dots,z^{-\delay_\matSize} ]}$ is the diagonal $\matSize \times \matSize$ delay matrix \cite{Jot:1991tq}. The system order is given by the sum of all delay units, i.e., $\N = \sum_{i=1}^\matSize \delay_i$ \cite{Rocchesso:1997fv}. For commonly used delays $\Delay$, the system order is much larger than the FDN size, i.e., $\N \gg \matSize$.

The transfer function matrix \eqref{eq:FDNtrans} can be stated as a rational polynomial \cite{Rocchesso:1997fv,Schlecht:2019uj}, i.e.,
\begin{equation}
	\TF(z) = \frac{\Gcqs(z)}{\gcp_{\Delay, \Fbm}(z)}, 
	\label{eq:FDNtransRational}
\end{equation}
where the denominator is a scalar-valued polynomial
\begin{equation}		
	\gcp_{\Delay, \Fbm}(z) = \detp{ \MatPoly(z) },
	\label{eq:gcp}
\end{equation}	
where $\det$ denotes the determinant and the loop transfer function is
\begin{equation}
	\MatPoly(z) = \stdDelayArg{\inv{z}} -  \Fbm.
	\label{eq:loopTransfer}
\end{equation}
The numerator is a matrix-valued expression with
\begin{equation}
\begin{aligned}
	\Gcqs(z) =
	\Directgains \detp{ \MatPoly(z) } + \Outgains \, \adj(\MatPoly(z)) \, \Ingains
	\label{eq:gcq},
\end{aligned}
\end{equation}
where $\adj( \Fbm )$ denotes the adjugate of $\Fbm$ \cite{Schlecht:2019uj}. 
The FDN system poles $\pole_\poleIndex$, where $1 \leq i \leq \N$, are the roots of the generalized characteristic polynomial (GCP) $\gcp_{\Delay, \Fbm}(z)$ in \eqref{eq:gcp}. Thus, the system poles $\pole_\poleIndex$ are fully characterized by the delays $\Delay$ and the feedback matrix $\Fbm$. \schlecsn{The FDN is stable if all system poles lie within the unit circle. A sufficient stability condition is that the operator norm being $\lVert \Fbm \rVert < 1$ \cite{Rocchesso:1997fv}.}


\subsection{Allpass Property}
\label{sec:naive}
A transfer function matrix $\TF(z)$ with real coefficients is allpass if 
\begin{equation}
	\TF(z) \p*{\TF(\inv{z})}\tran = \eye,
	\label{eq:allpass}
\end{equation}
where $\eye$ denotes an identity matrix of appropriate size and $\cdot\tran$ denotes the transpose operation \cite{SmithIII:2007tm}. In particular, $\TF(z)$ is unitary for $z$ on the unit circle. If a MIMO system is allpass then $\det \TF(z)$ is allpass \cite[p.~772]{Vaidyanathan:1993uh}, i.e., 
\begin{equation}
	\abs{\det \TF(\ejw)} \equiv 1 \quad \textrm{ for any } \omega.
	\label{eq:allpassDet}
\end{equation}


For allpass filters, the coefficients of the numerator polynomial are in reversed order and possibly with reversed signs of the denominator coefficients \cite{Regalia:1988cj}. Thus, for an allpass FDN in \eqref{eq:FDNtransRational}, there exists $\allpassFactor = \pm 1$ with 
\begin{equation}
	\det \TF(z) = \allpassFactor \frac{z^{-\N} \gcp_{\Delay,\Fbm}(\inv{z}) }{\gcp_{\Delay,\Fbm}(z)}.
	\label{eq:flippedAllpass}
\end{equation}

In the following, we present a classic result for allpass state space systems.

\subsection{Allpass State Space Systems}
For a moment, we consider that all delays are single time steps, i.e., $\Delay = \ones$, where $\ones$ denotes a vector or matrix of ones with appropriate size. The time-domain recursion in \eqref{eq:FDNTimeDomain} reduces to the standard state space realization of a linear time-invariant (LTI) filter. We state a classic sufficient and necessary condition for state space systems to be allpass \cite{Regalia:1988cj,Ferrante:2016rf}.

\begin{theorem}
\label{th:allpassSS}
\schlecsn{Given the $\numIO \times \numIO$ transfer function with realization $\TF(z) = \Outgains \invp{z\eye - \Fbm} \Ingains + \Directgains$. The transfer function $\TF(z)$ is stable and allpass if and only if there exists a symmetric positive definite $\Dsim$ such that }
	\begin{equation}
		\begin{bmatrix}
			\Fbm & \Ingains \\ \Outgains & \Directgains 
		\end{bmatrix}
		\begin{bmatrix}
			\Dsim & \mat{0} \\ \mat{0} & \eye	
		\end{bmatrix}
		\begin{bmatrix}
			\Fbm\tran & \Outgains\tran \\ \Ingains\tran & \Directgains\tran 
		\end{bmatrix}
		=
		\begin{bmatrix}
			\Dsim & \mat{0} \\ \mat{0} & \eye	
		\end{bmatrix}.
		\label{eq:allpassFDNCondition}
	\end{equation}
\end{theorem}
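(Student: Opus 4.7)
The plan is to prove both implications by first unpacking \eqref{eq:allpassFDNCondition} into the three independent matrix equations
\[
\Fbm \Dsim \Fbm\tran + \Ingains \Ingains\tran = \Dsim, \quad \Fbm \Dsim \Outgains\tran + \Ingains \Directgains\tran = \mat{0}, \quad \Outgains \Dsim \Outgains\tran + \Directgains \Directgains\tran = \eye,
\]
and then manipulating the rational transfer function with them. The first equation is a discrete Lyapunov equation whose positive-definite solvability will be the bridge to stability, while the remaining two will close the allpass identity.

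For the \emph{if} direction, I would compute $\TF(z)\,\TF(\inv{z})\tran$ in closed form. Writing $A(z) = \invp{z\eye - \Fbm}$ and $B(z) = \invp{\inv{z}\eye - \Fbm\tran}$, the product splits into four terms. The central step is a resolvent identity derived from the first block equation: substituting $\Ingains\Ingains\tran = \Dsim - \Fbm\Dsim\Fbm\tran$ and applying $\Fbm A(z) = zA(z) - \eye$ together with $B(z)\Fbm\tran = \inv{z}B(z) - \eye$ yields
\[
A(z)\,\Ingains\Ingains\tran\,B(z) = z A(z)\Dsim + \inv{z}\Dsim B(z) - \Dsim.
\]
After substituting this back, the pieces $z\Outgains A(z)\Dsim\Outgains\tran$ and $\inv{z}\Outgains\Dsim B(z)\Outgains\tran$ are simplified using $z\Dsim = (z\eye - \Fbm)\Dsim + \Fbm\Dsim$ and its transposed counterpart, and the second block equation $\Fbm\Dsim\Outgains\tran = -\Ingains\Directgains\tran$ then causes the two cross terms $\Outgains A(z)\Ingains\Directgains\tran$ and $\Directgains\Ingains\tran B(z)\Outgains\tran$ to cancel pairwise. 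What remains is $\Outgains\Dsim\Outgains\tran + \Directgains\Directgains\tran$, which equals $\eye$ by the third block equation. Stability is inherited by noting that $\Dsim \succ 0$ in the Lyapunov equation constrains the eigenvalues of $\Fbm$ to the closed unit disk, and any pole of $\TF(z)$ on the unit circle would be incompatible with the resulting allpass identity.

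For the \emph{only if} direction, I would take $\Dsim$ to be the controllability Gramian, which exists as the unique positive semi-definite solution of the first block equation whenever $\Fbm$ is stable. Clearing resolvents in $\TF(z)\TF(\inv{z})\tran = \eye$ converts it into a polynomial identity in $z$; matching appropriate coefficients reproduces the remaining two block equations on $\Dsim$. The main obstacle is upgrading $\Dsim \succeq 0$ to $\Dsim \succ 0$: this requires controllability of the pair $(\Fbm, \Ingains)$, which I would establish in the non-minimal case by restricting to the reachable subspace via a Kalman decomposition, verifying that the reduced allpass system admits a strictly positive-definite Gramian, and then extending it back with a compatible block on the unreachable subspace. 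This minimality/positivity step is the delicate part; the remaining algebra is routine rational calculus.
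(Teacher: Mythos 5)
The paper does not actually prove Theorem~\ref{th:allpassSS}; it quotes it as a classical result from \cite{Regalia:1988cj,Ferrante:2016rf}. The nearest in-paper argument is the proof of Theorem~\ref{th:allpassFDN}, which establishes only the sufficiency direction (and for general delays), by first passing to the balanced form $\Dsim=\eye$ via Lemma~\ref{lm:balanced} and then chaining the three block identities. Your ``if'' direction is the same computation specialized to unit delays, organized around the resolvent identity
$A(z)\,\Ingains\Ingains\tran B(z) = zA(z)\Dsim + \inv{z}\Dsim B(z) - \Dsim$
instead of the balanced-form reduction; I checked this identity and the subsequent cancellations, and they are correct. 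Your stability remark can be tightened: if $v^{*}\Fbm=\lambda v^{*}$ with $\abs{\lambda}=1$, the first block equation gives $(1-\abs{\lambda}^{2})\,v^{*}\Dsim v=\lVert \Ingains\tran v\rVert^{2}=0$, so such a mode is unreachable and contributes no pole to $\TF(z)$; hence every pole of the transfer function lies strictly inside the unit circle.

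The genuine gap is in the ``only if'' direction, and it is not repairable by the patch you propose. Without a minimality hypothesis the necessity claim is false: take $\matSize=\numIO=1$, $\Fbm=1/2$, $\Ingains=1$, $\Outgains=0$, $\Directgains=1$. Then $\TF(z)\equiv 1$ is stable and allpass, yet the off-diagonal block equation reads $\Fbm\Dsim\Outgains\tran+\Ingains\Directgains\tran = 1\neq 0$ for \emph{every} $\Dsim$, so no admissible $\Dsim$ exists. This realization is already reachable, so your plan of restricting to the reachable subspace, obtaining a positive definite Gramian there, and extending it to the unreachable complement never gets off the ground --- the obstruction here is lack of observability, not of reachability, and in other examples (e.g., $\Outgains$ supported on an unreachable mode) the extension step itself is obstructed because $\Outgains\Dsim\Outgains\tran+\Directgains\Directgains\tran=\eye$ forces a singular $\Dsim$. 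The classical statement assumes a \emph{minimal} realization; under that hypothesis the controllability Gramian is automatically positive definite and your coefficient-matching argument does recover the remaining two block equations. You should therefore either add minimality to the hypotheses of the necessity direction or prove only the sufficiency direction, which is the half that the paper's own constructive development (the proof of Theorem~\ref{th:allpassFDN}) actually carries out; note that the necessity half is nonetheless invoked later in the proof of Theorem~\ref{th:PM}, so the minimality caveat is not cosmetic.
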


In the Section~\ref{sec:allpassFDN}, we present an extension of this theorem for allpass FDNs.

\subsection{Principal Minors and Diagonal Similarity}
To demonstrate system properties of an FDN independent from delays $\Delay$, we have earlier developed a representation of $\gcp_{\Delay,\Fbm}(z)$ based on the principal minors of $\Fbm$ \cite{Schlecht:2015hi, Schlecht:2017jt}. This representation is also useful to derive the uniallpass property of FDNs.

A principal minor $\det \Fbm(I)$ of a matrix $\Fbm$ is the determinant of a submatrix $\Fbm(I)$ with equal row and column indices $I \subset \Nset$.  The set of all indices is denoted by $\Nset = \{1,2,\dots,N\}$ and $I^c$ is the relative complement in $\Nset$, i.e., $I^c = \Nset \setminus I$. $\abs{I}$ indicates the cardinality of set $I$.

For a given feedback matrix $\Fbm$ and delays $\Delay$, the generalized characteristic polynomial $\gcp_{\Delay,\Fbm}(z)$ is given by
\begin{align}
\label{eq:cPcoefficients}
	\gcp_{\Delay,\Fbm}(z) &= \sum_{k = 0}^{\N} c_k \, z^k   \\
	c_k &= 	
			\begin{cases} 
				\sum_{I \in I_k} (-1)^{N - |I|}\det \Fbm(I^c),  & \text{for } I_k \neq \emptyset\\ 
  				0, & \text{otherwise }
			\end{cases} \nonumber			
\end{align}
where $I_k = \{I \subset \Nset | \sum_{i \in I} \delay_i\ = k\}$. Note that for single sample delays, i.e., $\Delay = \ones$, $\gcp_{\Delay,\Fbm}(z)$ is the standard characteristic polynomial of matrix $\Fbm$. In contrast for $\Delay = [1, 2, \dots, 2^{\matSize-1}]$, $\abs{I_k} = 1$ for $0 \leq k \leq \N$ and therefore each $c_k$ has a single summand in \eqref{eq:cPcoefficients}. Thus, principal minors of $\Fbm$ constitutes a powerful delay-invariant representation.

The principal minors of invertible matrices $\Fbm$ are related by Jacobi's identity \cite{Brualdi:1983ih}, i.e.,
\begin{align}
	\det \inv{\Fbm}(I) = \frac{\det \Fbm(I^c)}{\det \Fbm} \qquad \textrm{ for any } I \subset \Nset.
	\label{eq:jacobi}
\end{align}

\schlecsn{Diagonally similar matrices $\mat{A}$ and $\mat{B}$, i.e., there exists non-singular diagonal matrix $\mat{E}$ with $\mat{E} \mat{A} \inv{\mat{E}} = \mat{B}$, have the same principal minors, however the converse is not true in general \cite{Loewy:1986dg}.}


In the following section, we derive the analogue of Theorem~\ref{th:allpassSS} for uniallpass FDNs with arbitrary delays $\Delay$.


\section{Uniallpass Feedback Delay Networks}
\label{sec:allpassFDN}
The central question of the present work is which system parameters constitute an allpass transfer function $\TF(z)$ in \eqref{eq:FDNtrans}. In particular, we are interested in uniallpass FDNs, i.e., allpass FDNs with $\Fbm$, $\Ingains$, $\Outgains$, and $\Directgains$ for arbitrary delays $\Delay$. 


\subsection{System Matrix}

First, we establish a convenient notation based on system matrices, i.e.,  
\begin{equation}
	\Sys = \begin{bmatrix}
			\Fbm & \Ingains \\ \Outgains & \Directgains 
			\end{bmatrix}, 
	\label{eq:systemMatrix}
\end{equation}
which is of size $\sysSize \times \sysSize$, where $\sysSize = \matSize + \numIO$.
The Schur complement of the invertible block $\Directgains$ in $\Sys$ is a matrix defined by
\begin{equation}
	\SchurDirect = \Fbm - \Ingains \inv{\Directgains} \Outgains
	\label{eq:SchurDirect}
\end{equation}
and equivalently the Schur complement of the invertible block $\Fbm$ is
\begin{equation}
	\SchurFbm = \Directgains - \Outgains \inv{\Fbm} \Ingains.
\end{equation}

\noindent If $\Fbm$, $\Directgains$, $\SchurDirect$, and $\SchurFbm$ are invertible, the block-wise inverse of the system matrix \eqref{eq:systemMatrix} is
\begin{equation}
	\inv{\Sys} = \begin{bmatrix}
			\invp{\SchurDirect} & -\inv{\Fbm} \Ingains \invp{\SchurFbm} \\ - \invp{\SchurFbm} \Outgains \inv{\Fbm} & \invp{\SchurFbm} 
			\end{bmatrix}.
			\label{eq:inverseSys}
\end{equation}
Further, the inverse of the Schur complements are related by 
\begin{equation}
	\invp{\SchurDirect} = \inv{\Fbm} + 
	\inv{\Fbm}\Ingains \invp{\SchurFbm} \Outgains \inv{\Fbm}.
	\label{eq:SchurComplementLong}
\end{equation}

\schlecsn{
\subsection{Balanced Form}

If there exists a symmetric positive definite $\Dsim$ in \eqref{eq:allpassFDNCondition}, then we can establish a balanced form. There exists a non-singular diagonal matrix $\DiagSim$ with $\Dsim = \DiagSim \DiagSim\tran$ such that by substituting $\FbmS = \inv{\DiagSim} \Fbm \DiagSim$, $\IngainsS = \inv{\DiagSim} \Ingains$, $\OutgainsS = \Outgains \DiagSim$ and, $\DirectgainsS = \Directgains$, we can state \eqref{eq:allpassFDNCondition} as 
\begin{equation}
		\begin{bmatrix}
			\FbmS & \IngainsS \\ \OutgainsS & \DirectgainsS 
		\end{bmatrix}
		\begin{bmatrix}
			\FbmS\tran & \OutgainsS\tran \\ \IngainsS\tran & \DirectgainsS\tran 
		\end{bmatrix}
		= \SysS \SysS\tran = \eye
		.
		\label{eq:allpassFDNTh_balanced}
\end{equation}
As $\Sys$ and $\SysS$ are similar, we have 
\begin{equation}
	\det \Sys = \det \SysS = \pm1.
	\label{eq:sysdetIsOne}
\end{equation}
 
\noindent From Jacobi's identity \eqref{eq:jacobi} with $I_N = \Nset$ in $\langle \sysSize \rangle$, and $\inv{\SysS} = \SysS\tran$, we have
\begin{equation}
\begin{aligned}
	\det \SysS(I_N^c) / \det \SysS &= \det \inv{\SysS}(I_N) \\
	\det \DirectgainsS &= \det \SysS \det \FbmS\tran \\
	\det \Directgains &= \det \Sys \det \Fbm.
	\label{eq:directFeedback}
\end{aligned}
\end{equation}

\subsection{Diagonal Similarity Invariance} 

In the following, we show that the transfer function of an FDN is invariant under diagonal similarity.

\begin{lemma}
\label{lm:diagsim}
Let $\TF(z)$ be an FDN with a realization as in \eqref{eq:FDNtrans}. For any non-singular diagonal matrix $\DiagSim$, we have
	\begin{equation}
		\TF(z) = \TFS(z) = \OutgainsS \invp{ \stdDelayArg{\inv{z}} - \FbmS} \IngainsS + \DirectgainsS,
	\end{equation}	
	where $\FbmS = \inv{\DiagSim} \Fbm \DiagSim$, $\IngainsS = \inv{\DiagSim} \Ingains$, $\OutgainsS = \Outgains \DiagSim$, and, $\DirectgainsS = \Directgains$. We call $\TF(z)$ and $\TFS(z)$ being equivalent.
\end{lemma}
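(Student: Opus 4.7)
The proof plan rests on a single structural observation: the delay matrix $\stdDelayArg{\inv{z}}$ is diagonal, so it commutes with the diagonal similarity $\DiagSim$. Concretely, $\DiagSim\, \stdDelayArg{\inv{z}} = \stdDelayArg{\inv{z}}\, \DiagSim$, which is the only property of $\DiagSim$ that the argument will exploit beyond invertibility. This commutativity is exactly what lets the substitution $\FbmS = \inv{\DiagSim} \Fbm \DiagSim$ interact cleanly with the loop transfer function $\MatPoly(z) = \stdDelayArg{\inv{z}} - \Fbm$ in \eqref{eq:loopTransfer}.

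The plan is to start from the transformed realization $\TFS(z) = \OutgainsS \invp{\stdDelayArg{\inv{z}} - \FbmS} \IngainsS + \DirectgainsS$, substitute the four definitions of $\FbmS, \IngainsS, \OutgainsS, \DirectgainsS$, and then factor the transformed loop matrix as
\begin{equation}
\stdDelayArg{\inv{z}} - \inv{\DiagSim}\Fbm\DiagSim
= \inv{\DiagSim}\, \stdDelayArg{\inv{z}}\, \DiagSim - \inv{\DiagSim}\Fbm\DiagSim
= \inv{\DiagSim}\bigl(\stdDelayArg{\inv{z}} - \Fbm\bigr)\DiagSim,
\end{equation}
using the diagonal commutation for the first equality. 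Taking the inverse and applying $(ABC)^{-1} = C^{-1} B^{-1} A^{-1}$ yields $\invp{\stdDelayArg{\inv{z}} - \FbmS} = \inv{\DiagSim}\, \invp{\stdDelayArg{\inv{z}} - \Fbm}\, \DiagSim$.

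Substituting back, the outer factors $\Outgains \DiagSim$ and $\inv{\DiagSim}\Ingains$ absorb the flanking $\inv{\DiagSim}$ and $\DiagSim$, and the direct term $\DirectgainsS = \Directgains$ is unchanged. This collapses $\TFS(z)$ to $\Outgains \invp{\stdDelayArg{\inv{z}} - \Fbm} \Ingains + \Directgains = \TF(z)$, proving equivalence.

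I do not expect a genuine obstacle here; the only place to be careful is the algebraic book-keeping of the inverse of a triple product and the implicit assumption that $\stdDelayArg{\inv{z}} - \Fbm$ is invertible as a rational matrix (which it is away from the system poles, making the identity hold as an equality of rational functions). Once the commutation of the two diagonal matrices is recognised, the rest is a one-line cancellation.
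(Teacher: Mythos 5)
Your proof is correct and follows essentially the same route as the paper's: both hinge on the commutation $\DiagSim \stdDelayArg{\inv{z}} \inv{\DiagSim} = \stdDelayArg{\inv{z}}$, the factorization of the transformed loop matrix as $\inv{\DiagSim}\p*{\stdDelayArg{\inv{z}} - \Fbm}\DiagSim$, and the cancellation of the flanking diagonal factors against $\OutgainsS$ and $\IngainsS$. No gap; nothing further is needed.
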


\begin{proof}
By substitution and $\DiagSim \stdDelayArg{\inv{z}} \inv{\DiagSim} = \stdDelayArg{\inv{z}}$, we have
\begin{align*}
	\TFS(z) &= \OutgainsS \invp{ \stdDelayArg{\inv{z}} - \FbmS} \IngainsS + \DirectgainsS \\
			&= \Outgains \DiagSim \invp{ \stdDelayArg{\inv{z}} - \inv{\DiagSim} \Fbm \DiagSim} \inv{\DiagSim} \Ingains + \Directgains \\
			&= \Outgains \invp{ \stdDelayArg{\inv{z}} - \Fbm} \Ingains + \Directgains = \TF(z).
\end{align*}
\end{proof}

As a consequence, we can establish a more refined stability criterion.

\begin{theorem} \label{th:stableFDN}
An FDN realized as in \eqref{eq:FDNtrans} is stable if there exists a non-singular diagonal matrix $\DiagSim$ such that $\anynorm{ \inv{\DiagSim} \Fbm \DiagSim } < 1$.
\end{theorem}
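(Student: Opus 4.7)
The plan is to reduce the claim to the classical sufficient condition already quoted after (6), namely that an FDN is stable whenever its feedback matrix has operator norm strictly less than one, by passing to an equivalent realization through the diagonal similarity $\DiagSim$ furnished by the hypothesis.

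Concretely, I would proceed as follows. First, invoke Lemma~\ref{lm:diagsim} with the given $\DiagSim$: the FDN with parameters $(\Fbm,\Ingains,\Outgains,\Directgains,\Delay)$ has exactly the same transfer function as the FDN with parameters $(\FbmS,\IngainsS,\OutgainsS,\DirectgainsS,\Delay)$, where $\FbmS=\inv{\DiagSim}\Fbm\DiagSim$. Second, note that the system poles of an FDN are the roots of its generalized characteristic polynomial, which by \eqref{eq:gcp} is determined entirely by the denominator of the rational form in \eqref{eq:FDNtransRational}; since both realizations produce the same rational transfer function, they share the same pole set. (One may alternatively observe directly that $\det(\stdDelayArg{\inv{z}}-\FbmS)=\det(\inv{\DiagSim}(\stdDelayArg{\inv{z}}-\Fbm)\DiagSim)=\gcp_{\Delay,\Fbm}(z)$, since $\DiagSim$ commutes with the diagonal $\stdDelayArg{\inv{z}}$.) Third, apply the classical sufficient stability criterion recalled at the end of Section~\ref{sec:MIMOFDN} to the equivalent realization: by hypothesis $\anynorm{\FbmS}<1$, so the equivalent FDN is stable, which means all its poles lie strictly inside the unit circle. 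Since the pole sets of the two realizations coincide, the original FDN is stable as well.

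There is essentially no technical obstacle here; the result is an immediate corollary of Lemma~\ref{lm:diagsim} combined with the operator-norm stability condition. The only point that deserves a line of justification is why diagonal similarity preserves not merely the eigenvalues of $\Fbm$ but the full delay-dependent characteristic polynomial $\gcp_{\Delay,\Fbm}(z)$, and this follows because the diagonal conjugation leaves $\stdDelayArg{\inv{z}}$ invariant, so $\stdDelayArg{\inv{z}}-\FbmS=\inv{\DiagSim}(\stdDelayArg{\inv{z}}-\Fbm)\DiagSim$ and the determinant is unchanged. Thus the refinement over the standard criterion comes essentially for free, and merely enlarges the class of feedback matrices certifiable as stable from those with $\anynorm{\Fbm}<1$ to those that are diagonally similar to such a matrix.
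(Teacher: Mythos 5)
Your proof is correct and follows essentially the same route as the paper: invoke Lemma~\ref{lm:diagsim} to pass to the equivalent realization with feedback matrix $\inv{\DiagSim}\Fbm\DiagSim$ and then apply the classical criterion $\anynorm{\Fbm}<1$ from \cite{Rocchesso:1997fv}. If anything, you are slightly more careful than the paper's own two-line proof, since your observation that $\stdDelayArg{\inv{z}}-\FbmS=\inv{\DiagSim}(\stdDelayArg{\inv{z}}-\Fbm)\DiagSim$ leaves the generalized characteristic polynomial \eqref{eq:gcp} -- and hence the full pole set, not merely the reduced transfer function -- invariant closes the small gap of whether equivalence of transfer functions alone suffices to transfer stability between realizations.
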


\begin{proof}
An FDN is stable if $\anynorm{\Fbm} < 1$ \cite{Rocchesso:1997fv}. According to Lemma~\ref{lm:diagsim}, for any non-singular diagonal matrix $\DiagSim$, there is an equivalent FDN with feedback matrix $\inv{\DiagSim} \Fbm \DiagSim$. The FDN is therefore stable, if for any such $\DiagSim$, we have $\anynorm{ \inv{\DiagSim} \Fbm \DiagSim } < 1$.
\end{proof}

We can further establish a balanced form for an FDN under diagonal similarity.

\begin{lemma}
	\label{lm:balanced}
If there exists a diagonal positive definite $\Dsim$ with
	\begin{equation}
		\begin{bmatrix}
			\Fbm & \Ingains \\ \Outgains & \Directgains 
		\end{bmatrix}
		\begin{bmatrix}
			\Dsim & \mat{0} \\ \mat{0} & \eye	
		\end{bmatrix}
		\begin{bmatrix}
			\Fbm\tran & \Outgains\tran \\ \Ingains\tran & \Directgains\tran 
		\end{bmatrix}
		=
		\begin{bmatrix}
			\Dsim & \mat{0} \\ \mat{0} & \eye	
		\end{bmatrix},
		\label{eq:DiagLemmaCondition}
	\end{equation}
then there exists an equivalent FDN in balanced form, i.e.,
	\begin{equation}
		\begin{bmatrix}
			\FbmS & \IngainsS \\ \OutgainsS & \DirectgainsS 
		\end{bmatrix}
		\begin{bmatrix}
			\FbmS\tran & \OutgainsS\tran \\ \IngainsS\tran & \DirectgainsS\tran 
		\end{bmatrix}
		=
		\begin{bmatrix}
			\eye & \mat{0} \\ \mat{0} & \eye	
		\end{bmatrix}.
		\label{eq:allpassBalancedCondition}
	\end{equation}
\end{lemma}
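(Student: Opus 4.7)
The plan is to exploit the fact that $\Dsim$ is \emph{diagonal} and positive definite, so its principal square root is also a non-singular diagonal matrix. Concretely, I would set $\DiagSim = \Dsim^{1/2}$, the diagonal matrix whose entries are the positive square roots of the diagonal entries of $\Dsim$, so that $\Dsim = \DiagSim \DiagSim\tran$ with $\DiagSim$ diagonal. This is precisely the type of similarity admitted by Lemma~\ref{lm:diagsim}, so the transformation $\FbmS = \inv{\DiagSim} \Fbm \DiagSim$, $\IngainsS = \inv{\DiagSim} \Ingains$, $\OutgainsS = \Outgains \DiagSim$, $\DirectgainsS = \Directgains$ yields an equivalent FDN.

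Next I would verify \eqref{eq:allpassBalancedCondition} by direct block multiplication. Expanding the left-hand side of \eqref{eq:DiagLemmaCondition} block-wise gives the three identities $\Fbm \Dsim \Fbm\tran + \Ingains \Ingains\tran = \Dsim$, $\Fbm \Dsim \Outgains\tran + \Ingains \Directgains\tran = \mat{0}$, and $\Outgains \Dsim \Outgains\tran + \Directgains \Directgains\tran = \eye$. Substituting $\Dsim = \DiagSim \DiagSim\tran$ and conjugating by $\inv{\DiagSim}$ on the appropriate sides converts these into $\FbmS \FbmS\tran + \IngainsS \IngainsS\tran = \eye$, $\FbmS \OutgainsS\tran + \IngainsS \DirectgainsS\tran = \mat{0}$, and $\OutgainsS \OutgainsS\tran + \DirectgainsS \DirectgainsS\tran = \eye$, which is exactly the block form of \eqref{eq:allpassBalancedCondition}.

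There is no real obstacle here: the proof is essentially a change of basis, and the only subtlety worth flagging is that the argument relies crucially on $\Dsim$ being diagonal (not merely symmetric positive definite), since otherwise its square root would not be diagonal and Lemma~\ref{lm:diagsim} would not apply. Given this, the lemma reduces to checking three straightforward block identities after the substitution, so the write-up can be kept short.
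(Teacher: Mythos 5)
Your proposal is correct and follows essentially the same route as the paper: take the diagonal square root $\DiagSim$ of $\Dsim$, apply Lemma~\ref{lm:diagsim} to obtain the equivalent FDN, and verify \eqref{eq:allpassBalancedCondition} by substitution. The only difference is that you spell out the three block identities explicitly, which the paper leaves implicit.
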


\begin{proof}
As $\Dsim$ is diagonal and positive definite, there exists diagonal $\DiagSim$ with $\DiagSim \DiagSim\tran = \Dsim$. According to Lemma~\ref{lm:diagsim}, there is an equivalent FDN with $\FbmS = \inv{\DiagSim} \Fbm \DiagSim$, $\IngainsS = \inv{\DiagSim} \Ingains$, $\OutgainsS = \Outgains \DiagSim$, and, $\DirectgainsS = \Directgains$. By substituting in \eqref{eq:allpassBalancedCondition}, we have
\begin{equation*}
		\begin{bmatrix}
			\inv{\DiagSim} \Fbm \DiagSim & \inv{\DiagSim} \Ingains \\ \Outgains \DiagSim & \Directgains 
		\end{bmatrix}
		\begin{bmatrix}
			\p*{\inv{\DiagSim} \Fbm \DiagSim}\tran & \p*{\Outgains \DiagSim}\tran \\ \p*{\inv{\DiagSim}\Ingains}\tran & \Directgains\tran 
		\end{bmatrix}
		=
		\begin{bmatrix}
			\eye & \mat{0} \\ \mat{0} & \eye	
		\end{bmatrix}.
	\end{equation*}
\end{proof}

\subsection{Sufficient Condition for Uniallpass FDNs}
In the following, we derive a sufficient condition for an FDN to be allpass, which is analogous to the sufficient condition of state-space Theorem~\ref{th:allpassSS}. 

\begin{theorem}
\label{th:allpassFDN}
	Given a stable FDN realized as in \eqref{eq:FDNtrans}, then $\TF(z)$ is uniallpass, i.e., allpass for any $\Delay$, if there  exists a diagonal positive definite $\Dsim$ with
	\begin{equation}
		\begin{bmatrix}
			\Fbm & \Ingains \\ \Outgains & \Directgains 
		\end{bmatrix}
		\begin{bmatrix}
			\Dsim & \mat{0} \\ \mat{0} & \eye	
		\end{bmatrix}
		\begin{bmatrix}
			\Fbm\tran & \Outgains\tran \\ \Ingains\tran & \Directgains\tran 
		\end{bmatrix}
		=
		\begin{bmatrix}
			\Dsim & \mat{0} \\ \mat{0} & \eye	
		\end{bmatrix}.
		\label{eq:allpassFDNThcondition}
	\end{equation}
\end{theorem}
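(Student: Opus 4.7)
The plan is to reduce to the balanced case, where the system matrix is orthogonal, and then to establish the allpass property via a time-domain energy argument that is manifestly oblivious to the choice of $\Delay$.

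First, since $\Dsim$ is diagonal and positive definite, Lemma~\ref{lm:balanced} produces an equivalent FDN whose realization is balanced, i.e., $\SysS \SysS\tran = \eye$. By Lemma~\ref{lm:diagsim} the transfer function is unchanged, so it suffices to prove uniallpass for the balanced realization $(\FbmS, \IngainsS, \OutgainsS, \DirectgainsS)$.

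Next, reading \eqref{eq:FDNTimeDomain} off row by row shows that $\SysS$ maps the stacked vector with blocks $\ssstate(n)$ and $\Insignal(n)$ to the stacked vector with blocks $\ssstate(n+\Delay)$ and $\Outsignal(n)$. Orthogonality of $\SysS$ therefore yields, for every time sample $n$,
\begin{equation*}
    \lVert \ssstate(n+\Delay) \rVert^2 + \lVert \Outsignal(n) \rVert^2 = \lVert \ssstate(n) \rVert^2 + \lVert \Insignal(n) \rVert^2.
\end{equation*}
Summing over all $n$ for a finite-energy input $\Insignal$, with the stability hypothesis guaranteeing $\ell^2$-summable state trajectories, the two state contributions cancel, since $\sum_n \ssstate_i(n+\delay_i)^2 = \sum_n \ssstate_i(n)^2$ for each $i$ by shift-invariance of the summation on each component. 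Hence $\sum_n \lVert \Outsignal(n) \rVert^2 = \sum_n \lVert \Insignal(n) \rVert^2$, i.e., the input--output map is an $\ell^2$-isometry. Parseval then forces $\TF(\ejw)$ to be unitary for almost every $\omega$, and since $\TF(z)$ is rational, the identity $\TF(z) \TF(\inv{z})\tran = \eye$ extends by analytic continuation to the whole $z$-plane. Because delays enter only as per-component index shifts, this argument applies verbatim to any $\Delay$, delivering the uniallpass property.

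The main obstacle is the shift-of-summation step: the cancellation $\sum_n \ssstate_i(n+\delay_i)^2 = \sum_n \ssstate_i(n)^2$ requires $\ssstate_i \in \ell^2$, which is exactly where the stability hypothesis is used; without it, orthogonality of $\SysS$ alone does not force allpass. A purely algebraic alternative would be to expand $\TF(z)\TF(\inv{z})\tran$ from the four block identities coming from $\SysS \SysS\tran = \eye$; since $\stdDelayArg{\inv{z}}$ does not commute with $\FbmS$, the intermediate manipulations become cumbersome, so the energy-based proof seems clearly preferable.
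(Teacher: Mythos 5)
Your proof is correct, but it takes a genuinely different route from the paper's. Both arguments begin identically, reducing to the balanced case via Lemma~\ref{lm:balanced} and Lemma~\ref{lm:diagsim}; they diverge afterwards. The paper stays entirely in the $z$-domain: it splits $\SysS\SysS\tran=\eye$ into the three block identities, writes $\MatPoly(z)=\stdDelayArg{\inv{z}}-\Fbm$ as in \eqref{eq:loopTransfer}, and expands $\eye=\stdDelayArg{\inv{z}}\stdDelay=\p*{\MatPoly(z)+\Fbm}\p*{\MatPoly(\inv{z})+\Fbm}\tran$ to obtain $\TF(z)\p*{\TF(\inv{z})}\tran=\eye$ as an identity of rational functions in a few lines --- so the algebraic alternative you dismissed as cumbersome is in fact the paper's proof, and the non-commutativity of $\stdDelayArg{\inv{z}}$ with $\FbmS$ never becomes an issue because only the product $\stdDelayArg{\inv{z}}\stdDelay=\eye$ is needed. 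Your time-domain argument instead reads \eqref{eq:FDNTimeDomain} as the orthogonal map $\SysS$ sending $(\Ssstate(n),\Insignal(n))$ to $(\Ssstate(n+\Delay),\Outsignal(n))$, cancels the state energy by per-component shift invariance of the sum over $n$, and concludes via Parseval and analytic continuation. What each buys: your energy argument makes the delay-independence conceptually transparent (each delay line conserves its own $\ell^2$ energy regardless of its length) and would survive, e.g., time-varying orthogonal feedback matrices; the paper's algebraic identity needs no measure-theoretic step, no discussion of two-sided $\ell^2$ solutions or initial conditions, and establishes \eqref{eq:allpass} for all $z$ directly rather than first a.e.\ on the unit circle. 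One point to tighten in your write-up: justify that a finite-energy input yields an $\ell^2$ state trajectory (the input-to-state transfer function shares the system poles, which lie inside the unit circle by stability), and that the quadratic-form identity $\int\hat{u}^*\p*{\TF^*\TF-\eye}\hat{u}\,d\omega=0$ for all $\hat{u}$ forces the Hermitian continuous integrand to vanish; both are routine but are the places where your argument could be challenged.
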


\begin{proof}
As the conditions of Lemma~\ref{lm:balanced} are satisfied, we assume that the FDN is in balanced form, i.e., $\Dsim = \eye$. We split \eqref{eq:allpassFDNThcondition} into the individual identities
\begin{equation} \label{eq:balancedSubIdentities}
\begin{aligned}
\Fbm \Fbm\tran + \Ingains \Ingains\tran &= \eye \\
\Fbm \Outgains\tran + \Ingains \Directgains\tran &= \zeros \\
\Outgains \Outgains\tran + \Directgains \Directgains\tran &= \eye .
\end{aligned}
\end{equation}

We show that the allpass condition \eqref{eq:allpass} holds for any $\Delay$. For compactness, we write $\MatPoly(z) = \stdDelayArg{\inv{z}} -  \Fbm$ as in \eqref{eq:loopTransfer} such that
\begin{align*}
	\eye &=  \stdDelayArg{\inv{z}} \stdDelay = \p*{\MatPoly(z) + \Fbm} \p*{\MatPoly(\inv{z}) + \Fbm}\tran \\
	&= \MatPoly(z) \MatPoly(\inv{z})\tran + \MatPoly(z)\Fbm\tran + \Fbm \MatPoly(\inv{z})\tran + \Fbm \Fbm\tran.
\end{align*}
Thus, by using the identities in \eqref{eq:balancedSubIdentities}, we derive
\begin{align*}	
	\zeros &= \MatPoly(z) \MatPoly(\inv{z})\tran + \MatPoly(z)\Fbm\tran + \Fbm \MatPoly(\inv{z})\tran - \Ingains \Ingains\tran \\
	&=  \eye + \Fbm\tran \MatPoly(\inv{z})\itran + \inv{\MatPoly}(z) \Fbm - \inv{\MatPoly}(z) \Ingains \Ingains\tran \MatPoly(\inv{z})\itran \\
	&=  \Outgains \Outgains\tran + \Outgains \Fbm\tran \MatPoly(\inv{z})\itran  \Outgains\tran + \Outgains \inv{\MatPoly}(z) \Fbm  \Outgains\tran \\ &\quad- \Outgains \inv{\MatPoly}(z) \Ingains \Ingains\tran \MatPoly(\inv{z})\itran  \Outgains\tran \\
	&= \eye - \Directgains \Directgains\tran - \Directgains \Ingains\tran \MatPoly(\inv{z})\itran  \Outgains\tran - \Outgains \inv{\MatPoly}(z) \Ingains \Directgains\tran \\ &\quad- \Outgains \inv{\MatPoly}(z) \Ingains \Ingains\tran \MatPoly(\inv{z})\itran  \Outgains\tran.
\end{align*}
Thus,
\begin{align}
	\eye &= \p*{\Directgains + \Outgains \inv{\MatPoly}(z) \Ingains } \p*{\Directgains + \Outgains \inv{\MatPoly}(\inv{z}) \Ingains }\tran   \\
	&= \TF(z) \p*{\TF(\inv{z})}\tran.
\end{align}
Therefore for any $\Delay$, the transfer function $\TF(z)$ is allpass.


%
%
	

\end{proof}

For such a uniallpass FDN, we have $\det \Directgains = \pm \det \Fbm$, see \eqref{eq:directFeedback}. Thus, like in Schroeder allpass structures \cite{Schroeder:1961ke}, there is an inherent relation between the direct component and the decay rate of the response.

\subsection{Necessary Condition for Uniallpass FDNs}
The main challenge in the following theorem is that the allpass property is to be independent of the choice of the delays $\Delay$. Therefore, we give a necessary condition based on the principal minors of the system matrix $\Sys$ alone.

\begin{theorem} \label{th:PM}
If a stable FDN realized as in \eqref{eq:FDNtrans} with non-singular $\Directgains$ is uniallpass, then there exists $\allpassFactor = \pm 1$ with
	\begin{equation}
	\det \SchurDirect(I) = \allpassFactor \det \inv{\Fbm}(I) \qquad \forall I \subset \Nset.
	\label{eq:allpassPMs}
	\end{equation}
For the SISO case, i.e., $\numIO = 1$, the FDN is uniallpass if and only if \eqref{eq:allpassPMs} holds.
\end{theorem}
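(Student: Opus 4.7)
The plan is to reduce both directions to a polynomial identity between $\gcp_{\Delay,\SchurDirect}(z)$ and the reciprocal polynomial derived from $\gcp_{\Delay,\Fbm}(z)$, and then to extract the principal-minor equalities by using the uniallpass hypothesis to vary $\Delay$ freely.

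For the necessary direction, I first compute $\det \TF(z)$. Applying the block-determinant formula two ways to the augmented matrix $\left[\begin{smallmatrix} \MatPoly(z) & -\Ingains \\ \Outgains & \Directgains \end{smallmatrix}\right]$ (Schur complement of $\MatPoly(z)$, then of $\Directgains$) gives
\begin{equation*}
\det \TF(z) = \frac{\det \Directgains \cdot \gcp_{\Delay,\SchurDirect}(z)}{\gcp_{\Delay,\Fbm}(z)}.
\end{equation*}
Combining with the allpass characterisation \eqref{eq:flippedAllpass} produces the polynomial identity $\det \Directgains \cdot \gcp_{\Delay,\SchurDirect}(z) = \allpassFactor\, z^{-\N} \gcp_{\Delay,\Fbm}(\inv{z})$. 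Expanding both sides via the principal-minor formula \eqref{eq:cPcoefficients} gives two sums of the form $\sum_I z^{\sum_{i\in I}\delay_i} w_I$ whose weights involve $\det \SchurDirect(I^c)$ and $\det \Fbm(I^c)$, respectively. Because this must hold for \emph{every} $\Delay$, I specialise to a generic choice such as $\Delay = (1,2,4,\dots,2^{\matSize-1})$, for which the exponents $\sum_{i\in I} \delay_i$ are pairwise distinct; this forces term-by-term equality of the weights for each subset $I$. Jacobi's identity \eqref{eq:jacobi} then converts $\det \Fbm(I^c)/\det \Fbm$ into $\det \inv{\Fbm}(I)$, delivering the claimed relation $\det \SchurDirect(I) = \allpassFactor \det \inv{\Fbm}(I)$.

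For the SISO converse, the matrix-determinant lemma applied to the rank-one perturbation in $\Gcqs(z) = \directgain \gcp_{\Delay,\Fbm}(z) + \Outgain \adj(\MatPoly(z))\Ingain$ collapses the scalar numerator to $\directgain\, \gcp_{\Delay,\SchurDirect}(z)$, so $\TF(z) = \directgain\, \gcp_{\Delay,\SchurDirect}(z)/\gcp_{\Delay,\Fbm}(z)$. Substituting the assumed principal-minor identity into the expansion of $\gcp_{\Delay,\SchurDirect}(z)$, then reindexing $I \mapsto I^c$ and applying Jacobi, rewrites $\gcp_{\Delay,\SchurDirect}(z)$ as an explicit scalar multiple of $z^{-\N} \gcp_{\Delay,\Fbm}(\inv{z})$. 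The $I = \Nset$ instance of the principal-minor relation pins down $\det \SchurDirect$ in terms of $\det \Fbm$, and through the Schur identity $\det \Sys = \directgain \det \SchurDirect = \det \Fbm \cdot \SchurFbm$ fixes the overall scalar so that $\TF(z)$ lands in precisely the allpass form of \eqref{eq:flippedAllpass} for every $\Delay$.

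The main obstacle is the combinatorial sign bookkeeping: tracking the factors $(-1)^{\matSize-|I|}$ from \eqref{eq:cPcoefficients} through the $I \leftrightarrow I^c$ substitution and reconciling them with Jacobi's identity and the global sign $\allpassFactor$. A related subtlety is justifying the ``generic-$\Delay$'' trick that converts the polynomial identity in $z$ into a family of coefficient equations indexed by subsets; this step is licensed precisely by uniallpass, since the identity must hold for every delay vector, so a single convenient choice of $\Delay$ suffices to extract the full set of principal-minor equalities.
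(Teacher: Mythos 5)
Your proposal is correct and follows essentially the same route as the paper: compute $\det \TF(z)$ as $\det\Directgains\,\gcp_{\Delay,\SchurDirect}(z)/\gcp_{\Delay,\Fbm}(z)$ (your two-way Schur-complement expansion is the matrix determinant lemma the paper invokes), impose the reversed-coefficient condition \eqref{eq:flippedAllpass}, specialize to $\Delay = [1,2,\dots,2^{\matSize-1}]$ so that each coefficient of \eqref{eq:cPcoefficients} has a single summand, and finish with Jacobi's identity; the SISO converse likewise matches the paper's (much terser) appeal to the sufficiency of the scalar allpass condition, with your summand-by-summand substitution for arbitrary $\Delay$ spelled out more explicitly.
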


\begin{proof}
If the FDN is stable and uniallpass, then it is also allpass for $\Delay = \ones$. Therefore, Theorem~\ref{th:allpassSS} applies and due to \eqref{eq:sysdetIsOne} and \eqref{eq:directFeedback}, we have $\det \Sys = \pm 1$ and $\Fbm$ is non-singular if and only if $\Directgains$ is non-singular.

According to \eqref{eq:allpassDet}, if the FDN is allpass then the determinant of the transfer function $\det \TF(z)$ is allpass. Applying the matrix determinant lemma \cite{Petersen:2012up} in \eqref{eq:FDNtrans} and using the Schur complement notation \eqref{eq:SchurDirect}, we have  
\begin{align}
	\det \TF(z) &= \frac{ \detp{ \stdDelayArg{\inv{z}} - \Fbm + \Ingains \inv{\Directgains} \Outgains } \det \Directgains  }{ \detp{  \stdDelayArg{\inv{z}} - \Fbm } } \\
	&= \frac{ \gcp_{\Delay,\SchurDirect}(z) \det \Directgains  }{ \gcp_{\Delay,\Fbm}(z) }.
	\label{eq:TFdeterminant}
\end{align} 
According to \eqref{eq:flippedAllpass}, for $\det \TF(z)$ to be allpass, the coefficients of denominator and numerator of \eqref{eq:TFdeterminant} are in reversed order, i.e., there exists $\allpassFactor = \pm 1$ such that
\begin{equation}
	\gcp_{\Delay,\SchurDirect}(z) \det \Directgains =  \allpassFactor z^{-\N} \gcp_{\Delay,\Fbm}(\inv{z}). 
	\label{eq:GCPreversed}
\end{equation}

\noindent For the special case $\Delay = [1, 2, \dots, 2^{\matSize-1}]$, \eqref{eq:GCPreversed} holds if and only if 
\begin{equation*}
	\det \Directgains \det \SchurDirect(I) = \allpassFactor \det \Fbm(I^c) \qquad \forall I \subset \Nset
\end{equation*}
as $\abs{I_k} = 1$ for any $k$ such that each coefficient $c_k$ in \eqref{eq:cPcoefficients} has a single summand. Applying Jacobi's identity \eqref{eq:jacobi} and \eqref{eq:directFeedback} yields \eqref{eq:allpassPMs}. For the SISO case, \eqref{eq:allpassDet} is also a sufficient condition for the FDN to be allpass.
\end{proof}

To develop a necessary condition based on the diagonal similarity of the system matrix as in \eqref{eq:allpassFDNThcondition}, likely additional constraints are required. For instance with an additional rank condition on $\Fbm$, the correspondence of the principal minors \eqref{eq:allpassPMs} yields a diagonal similarity between $\Fbm$ and $\SchurDirect$ \cite{Loewy:1986dg}.

\subsection{Allpass is not Uniallpass} \label{sec:allpassNotUniallpass}

We use Theorem~\ref{th:PM} to construct an example for an FDN which is allpass for only certain delays $\Delay$, but not for other choices and therefore not being uniallpass:
\begin{align*}
	\Fbm &= \begin{bmatrix} 
1.241 & 3.833 &  -6.028 \\ 
-0.859 & -2.276 &  3.582 \\ 
-0.048 & -0.180 &  -0.332 \\ 
\end{bmatrix},\\
	\Ingain\tran &= \begin{bmatrix} 
1.833 & -0.469 &  0.826 \\ 
\end{bmatrix},  \\
	\Outgain &= \begin{bmatrix} 
0.430 & 0.831 &  0.452 \\ 
\end{bmatrix},  \\
	\directgain &= 0.288.
\end{align*}
The principal minors of $\inv{\Fbm}$ and $\SchurDirect$ are, respectively, 
\begin{align*}
	&\begin{bmatrix} 
		1.00 , -4.86 , \hphantom{-}2.44 , -1.63 , 1.15 ,\hphantom{-}7.89 , -4.30 ,  -3.47 \\ 
	\end{bmatrix},\\
	&\begin{bmatrix} 
		1.00 , -1.49 , -0.92 , -1.63 , 1.15 , -8.97 , 12.56 ,  -3.47 \\
	\end{bmatrix}.
\end{align*}
The FDN is not uniallpass as only some of the principal minors coincide. However, the FDN is allpass for $\Delay = [1,1,1]$ as the polynomial coefficients of the transfer function numerator and denominator in \eqref{eq:FDNtransRational}, respectively, are in reverse order, see \eqref{eq:flippedAllpass}:
\begin{align*}
	&\begin{bmatrix} 
		0.29 & 1.17 & 1.37 &  1.00 \\ 
	\end{bmatrix} \\
	&\begin{bmatrix} 
		1.00 & 1.37 & 1.17 &  0.29 \\ 
	\end{bmatrix}.
\end{align*}
However, the FDN is not allpass for $\Delay = [2, 1, 1]$ as the numerator and denominator, respectively, are
\begin{align*}
	&\begin{bmatrix} 
0.29 & 0.74 & 4.05 & -2.26 &  1.00 \\ 
\end{bmatrix}\\
&\begin{bmatrix} 
1.00 & 2.61 & 0.16 & -0.23 &  0.29 \\ 
\end{bmatrix}.
\end{align*}
Then again, the FDN is allpass for $\Delay = [2, 2, 1]$ as the numerator and denominator, respectively, are
\begin{align*}
&\begin{bmatrix} 
0.29 & 0.47 & 0.70 & 1.03 & 0.33 &  1.00 \\ 
\end{bmatrix}\\
&\begin{bmatrix} 
1.00 & 0.33 & 1.03 & 0.70 & 0.47 &  0.29 \\ 
\end{bmatrix}.
\end{align*}
This example illustrates that for non-uniallpass FDNs, the allpass property intricately depends on $\Delay$. For larger systems, it becomes increasingly complex to determine the allpass property. Uniallpass FDNs provide an alternative, where the delays are an independent design parameter.

In the following section, we present methods to design uniallpass FDNs based on a desired feedback matrix $\Fbm$.   

}

\section{Uniallpass FDN Completion}
\label{sec:completion}

Uniallpass FDNs can be generated by a simple procedure for $\numIO$ input and output channels and $\matSize$ delay lines. First, generate an orthogonal system matrix $\Sys$ of size $\sysSize \times \sysSize$ with $\sysSize = \systemOrder + \numIO$. Optionally, apply a similarity transform with a non-singular diagonal matrix $\diag{\Dsim, \eye}$. However, note that the similarity transform does not alter the transfer function, but may change computational properties. Lastly, divide the system matrix $\Sys$ into the submatrices $\Fbm$, $\Ingains$, $\Outgains$, and $\Directgains$ according to \eqref{eq:systemMatrix}. However, this procedure does not allow to specify directly the feedback matrix $\Fbm$ and the resulting filter properties.

In this section, we present procedures related to the completion problem, i.e., determining $\Ingains$, $\Outgains$, and $\Directgains$ given $\Fbm$ such that $\Sys$ is uniallpass. The following subsections are: \ref{sec:determiningDiagonal}) determining $\Dsim$ given uniallpass $\Sys$; \ref{sec:admissible}) characterize admissible feedback matrices $\Fbm$; \ref{sec:orthogonalCompletion}) completion where $\Dsim = \eye$; and, \ref{sec:generalCompletion}) completion for any diagonal $\Dsim$.

\subsection{Determining Diagonal Similarity}
\label{sec:determiningDiagonal}
Given a uniallpass FDN as in Theorem~\ref{th:allpassFDN} with system matrix $\Sys$, the diagonal similarity matrix $\Dsim$ in \eqref{eq:allpassFDNThcondition} can be computed by solving the discrete-time Lyapunov equation\schlecsn{\footnote{\schlecsn{The discrete-time Lyapunov equation has off-the-shelf solver implementations such as dlyap in MATLAB.}}} \cite{Kaszkurewicz:2000}
\begin{equation}
	\Dsim - \Fbm \Dsim \Fbm\tran = \Ingains \Ingains\tran.
	\label{eq:lyapunovB}
\end{equation}
We give an alternative solution, which is helpful for the further development below. The system matrix $\Sys$ satisfies \eqref{eq:allpassFDNThcondition}, thus $\Sys$ is diagonally similar to an orthogonal matrix. We review here, key aspects of Engel and Schneider's algorithm to determine the diagonal similarity \cite{Engel:1982}. 

A system matrix $\Sys$ is diagonally similar to an orthogonal matrix if and only if $\inv{\Sys} \hadaQuot \Sys\tran$ is diagonally similar to a $\{0,1\}$-matrix $\ZeroOne$, i.e., $\ZeroOne \in \{0,1\}^{\sysSize \times \sysSize}$ \cite[Corollary 4.7 and 3.11]{Engel:1982}. Operation $\hadaQuot$ denotes an element-wise division also called Hadamard quotient, i.e.,
\begin{equation}
	\p*{\mat{A} \hadaQuot \mat{B}}_{ij} = \begin{cases}
		a_{ij} / b_{ij} \qquad &\textrm{ for }  b_{ij} \neq 0 \\
		0 \qquad &\textrm{ otherwise.}
	\end{cases}
\end{equation}

Thus with \eqref{eq:inverseSys}, the similarity transform $\Dsim$ can be readily retrieved from 
\begin{equation}
	\inv{\Dsim} \ZeroOne \Dsim = \invp{\SchurDirect} \hadaQuot \Fbm \tran.
	\label{eq:diagSimHadamard}
\end{equation}
For fully connected matrices $\Fbm$ and $\invp{\SchurDirect}$, i.e., having only non-zero elements, $\ZeroOne$ contains only ones. Then, \eqref{eq:diagSimHadamard} can be simply solved by a singular value decomposition. For non-fully connected $\Fbm$ and $\invp{\SchurDirect}$, the computation is performed on the spanning tree of the adjacency graph of $\Fbm$, for more details see \cite{Engel:1982}.

\subsection{Admissible Feedback Matrix}
\label{sec:admissible}
In the following, we characterize the feedback matrix $\Fbm$ of uniallpass FDNs with system matrix $\Sys$. First, we assume that $\Sys$ is orthogonal. The following theorem by Fiedler \cite{Fiedler:2009} gives sufficient and necessary conditions for such $\Fbm$.

\begin{theorem}[Fiedler \cite{Fiedler:2009}, Theorem 2.2]
\label{th:Fiedler}
Every $\matSize \times \matSize$ submatrix of an orthogonal $\sysSize \times \sysSize$ matrix has at least $2\matSize - \sysSize = \matSize - \numIO$ singular values equal to one and $\numIO$ singular values less than one.
 
Conversely, if $\Fbm$ is a $\matSize \times \matSize$ matrix that has $\matSize - k$ singular values equal to one and the remaining $k$ singular values less than one, then for every $\sysSize \geq \matSize + k$ there exists an orthogonal $\sysSize \times \sysSize$ matrix containing $\Fbm$ as a submatrix, and for no $\sysSize$ smaller than $\matSize + k$ does such matrix exist.
\end{theorem}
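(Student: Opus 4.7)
The plan is to prove both directions via the singular value decomposition of $\Fbm$ and an explicit orthogonal completion of a diagonal block. For the forward direction, suppose $\Sys$ is orthogonal and $\Fbm$ is an arbitrary $\matSize \times \matSize$ submatrix. By independent row and column permutations (which preserve orthogonality) I bring $\Fbm$ to the top-left block as in \eqref{eq:systemMatrix}. The first identity of \eqref{eq:balancedSubIdentities} then reads $\Fbm \Fbm\tran + \Ingains \Ingains\tran = \eye$, so $\eye - \Fbm \Fbm\tran = \Ingains \Ingains\tran$. The right-hand side is positive semi-definite of rank at most $\numIO$, hence at most $\numIO$ singular values of $\Fbm$ are strictly less than one, and the remaining $\matSize - \numIO = 2\matSize - \sysSize$ equal one. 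This already yields the first statement and, when applied to any orthogonal extension, the lower bound $\numIO \geq k$ in the second statement.

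For the converse, I reduce the completion problem to the diagonal case. Write $\Fbm = U \Sigma V\tran$ with orthogonal $U, V$ and $\Sigma = \diag{1, \dots, 1, \sigma_{\matSize-k+1}, \dots, \sigma_\matSize}$, where the last $k$ singular values lie in $[0, 1)$. Conjugation of any candidate extension by the orthogonal matrices $\diag{U\tran, \eye}$ from the left and $\diag{V, \eye}$ from the right is itself orthogonal and transforms $\Fbm$ to $\Sigma$ while leaving the size and orthogonality of the ambient matrix untouched. Hence it suffices to construct an orthogonal $\sysSize \times \sysSize$ matrix containing $\Sigma$ as its top-left block for every $\sysSize \geq \matSize + k$.

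The first $\matSize - k$ rows of $\Sigma$ already have unit norm and are mutually orthogonal, which forces the corresponding rows of $\Ingains$ to vanish; symmetrically the first $\matSize - k$ columns of $\Outgains$ must vanish. This decouples the identity block from a reduced problem: complete the $k \times k$ strict contraction $\Sigma_2 = \diag{\sigma_{\matSize-k+1}, \dots, \sigma_\matSize}$ to an orthogonal matrix of size $(k + \numIO) \times (k + \numIO)$. Setting $C = \sqrt{\eye - \Sigma_2^2}$, which is real and nonsingular since every entry of $\Sigma_2$ is strictly below one, the block matrix $\bigl[\begin{smallmatrix} \Sigma_2 & C \\ -C & \Sigma_2 \end{smallmatrix}\bigr]$ is orthogonal and realises the minimal case $\numIO = k$. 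For $\numIO > k$ I pad $C$ with $\numIO - k$ zero columns and append an $(\numIO - k) \times (\numIO - k)$ identity block in the bottom-right.

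The main obstacle I anticipate is not a single deep step but the bookkeeping of the SVD reduction: one must verify that the block-diagonal conjugation truly preserves the property of being an orthogonal extension of the original submatrix and that the trivial identity part and the nontrivial contracting part can be handled independently. Once this decoupling is transparent, the explicit Givens-style block and the zero/identity padding finish both the existence claim and, in combination with the rank argument of the first paragraph, the sharpness of the threshold $\sysSize = \matSize + k$.
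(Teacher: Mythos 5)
The paper does not prove this statement at all---it imports it verbatim from Fiedler \cite{Fiedler:2009}---so your proposal is supplying a proof where the paper offers only a citation. Your argument is correct and essentially the standard one. The forward direction is exactly the rank observation: after permuting the submatrix into the leading block, $\eye - \Fbm\Fbm\tran = \Ingains\Ingains\tran$ is positive semidefinite of rank at most $\numIO = \sysSize - \matSize$, which simultaneously shows all singular values are at most one and at most $\numIO$ are strictly below one; applied to any hypothetical extension it also gives the sharpness claim $\sysSize \geq \matSize + k$. The converse via SVD reduction to a diagonal contraction, the forced vanishing of the off-diagonal entries in the unit-norm rows and columns, and the explicit Givens-style block $\bigl[\begin{smallmatrix}\Sigma_2 & C\\ -C & \Sigma_2\end{smallmatrix}\bigr]$ with $C = \sqrt{\eye - \Sigma_2^2}$ (orthogonal because the diagonal blocks commute) is a clean constructive completion; the padding for $\numIO > k$ is just a direct sum with an identity and works as you describe. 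Two small points worth making explicit if you write this up: first, state that positive semidefiniteness of $\Ingains\Ingains\tran$ is what rules out singular values exceeding one (you use this silently when concluding that the non-deficient singular values ``equal one'' rather than merely ``differ from one''); second, note that the decoupling step needs both the row and the column norm conditions of orthogonality, since you must kill the first $\matSize - k$ rows of the right-hand blocks and the first $\matSize - k$ columns of the lower blocks before the problem reduces to the $(k+\numIO)\times(k+\numIO)$ corner. Neither is a gap, just bookkeeping to spell out.
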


In particular for the SISO case with $\numIO = 1$, $\Fbm$ has exactly one singular value less than one and the other singular values are one. In the full MIMO case, i.e., $\numIO = \matSize$, $\Fbm$ has all singular values less than one. Thus, any admissible feedback matrix $\Fbm$ of a uniallpass FDN is diagonally similar to a matrix with singular values as described above. There are various techniques to generate matrices with prescribed eigenvalues and singular values \cite{Chu:2000ez, Li:2001be}. Note, that for a stable FDN, the moduli of the eigenvalues of $\Fbm$ are less than one \cite{Rocchesso:1997fv}.

\subsection{Orthogonal Completion}
\label{sec:orthogonalCompletion}
We give a simple method for completing an orthogonal uniallpass system. Given an $\matSize \times \matSize$ submatrix $\Fbm$ of an $\sysSize \times \sysSize$ orthogonal matrix $\Sys$, i.e., $\Sys \Sys\tran = \eye$. Therefore, $\Dsim = \eye$ in \eqref{eq:allpassFDNThcondition}. The block matrices in \eqref{eq:allpassFDNThcondition} for $\Sys \Sys\tran = \eye$ and $\Sys\tran \Sys = \eye$ yield then
\begin{align}
	\eye - \Fbm \Fbm\tran &= \Ingains \Ingains\tran,\\
	\eye - \Fbm\tran \Fbm &= \Outgains\tran \Outgains,\\
	-\Ingains \Directgains\tran &= \Fbm \Outgains\tran.
\end{align}
The equations can be solved with a singular value decomposition, e.g., $\Ingains$ is the rank-$\numIO$ decomposition of $\eye - \Fbm \Fbm\tran$.

Particularly in the full MIMO case, any matrix $\Fbm$ with all singular values less than one can be completed to a uniallpass FDN. As demonstrated in the Section~\ref{sec:application}, this result is an extension to prior designs.


%
%
%
%
%
%
%

\subsection{General Completion}
\label{sec:generalCompletion}

Here, we complete a feedback matrix $\Fbm$, which is part of any (not necessarily orthogonal) uniallpass FDN. The first part of the procedure is general, whereas the latter part focuses on the SISO case. From \eqref{eq:allpassFDNThcondition} and \eqref{eq:inverseSys}, we have 
\begin{equation}
	\Directgains\tran = \invp{\SchurFbm} = \invp{\Directgains - \Outgains \inv{\Fbm} \Ingains}
\end{equation}
and further
\begin{align}
	-\inv{\Fbm} \Ingains \Directgains\tran &= \Dsim \Outgains\tran, \label{eq:recoverP} \\
	- \Directgains\tran \Outgains \inv{\Fbm}   &= \Ingains\tran \inv{\Dsim}.
\end{align}
Therefore, \eqref{eq:SchurComplementLong} is
\begin{equation}
	\invp{\SchurDirect} = \inv{\Fbm} + 
	\Dsim \Outgains\tran {\Directgains}\itran \Ingains\tran \inv{\Dsim}.
\end{equation}
Given the system matrix $\Sys$ of a uniallpass FDN, thus, $\Sys\tran$ and $\inv{\Sys}$ are diagonally similar and the Hadamard quotient $\Sys\tran \hadaQuot \inv{\Sys}$ is diagonally similar to a $\{0,1\}$-matrix. Thus, 
\begin{equation}
	\Quot = \p*{\inv{\Fbm} + 
	\Dsim \Outgains\tran {\Directgains}\itran \Ingains\tran \inv{\Dsim}} \hadaQuot \Fbm\tran 
	\label{eq:diagSimInverse}
\end{equation}
is diagonally similar to a $\{0,1\}$-matrix $\ZeroOne$. In particular, the diagonal elements of $\Quot$ are ones, and therefore 
\begin{equation}
	\p*{\Fbm}_{ii}  =  \p*{\inv{\Fbm}}_{ii} + \p*{\Outgains\tran {\Directgains}\itran \Ingains\tran}_{ii}.
	\label{eq:diagCondition}
\end{equation}

The remaining procedure is only for the SISO case, which is emphasized by using vectors and scalars $\Ingain$, $\Outgain$ and $\directgain$ instead of matrices. From the uniallpass property, we have $\directgain = \pm \det \Fbm$. We restate \eqref{eq:diagSimInverse}
\begin{equation}
	\Quot =  \p*{\inv{\Fbm} + 
	\frac{\Sim{\Outgain}\tran \Sim{\Ingain}\tran}{\directgain}} \hadaQuot \Fbm\tran ,
	\label{eq:diagSimInverseSISO}
\end{equation}
where $\Sim{\Outgain} = \Outgain \Dsim$, ${\Sim{\Ingain}} = \inv{\Dsim} {\Ingain}$. We can also rewrite \eqref{eq:diagCondition} for the SISO case, i.e., 
\begin{equation}
	\p*{\Fbm}_{ii} = \p*{\inv{\Fbm}}_{ii} + \p*{\Outgain\tran \inv{\directgain} \Ingain\tran}_{ii}.
\end{equation} 
More concisely, we can write
\begin{equation}
	\directgain \FbmDiag = \Outgain\tran \hadamard \Ingain =  \Sim{\Outgain}\tran \hadamard \Sim{\Ingain},
\end{equation} 
where 
\begin{equation} \label{eq:fbmDiag}
	\fbmDiag_{i} = \p*{\Fbm}_{ii} - \p*{\inv{\Fbm}}_{ii}
\end{equation}
and $\hadamard$ denotes the element-wise product, also called Hadamard product. By inspecting the individual matrix entries for $1 \leq i,j \leq \matSize$
\begin{equation}
	\p*{\Sim{\outgain}_i \Sim{\ingain}_j} \p*{\Sim{\ingain}_i \Sim{\outgain}_j} = \Sim{\outgain}_i \Sim{\outgain}_j \Sim{\ingain}_i \Sim{\ingain}_j = \p*{\Sim{\outgain}_i \Sim{\ingain}_i} \p*{\Sim{\outgain}_j \Sim{\ingain}_j},
\end{equation}
we derive an important identity
\begin{equation}
	\frac{\Sim{\Outgain}\tran \Sim{\Ingain}\tran}{\directgain} \hadamard \frac{\Sim{\Ingain} \Sim{\Outgain}}{\directgain} = \frac{( \Sim{\Outgain}\tran \hadamard \Sim{\Ingain} ) ( \Sim{\Outgain}\tran \hadamard \Sim{\Ingain} )\tran}{\directgain^2} = \FbmDiag \FbmDiag\tran.  
	\label{eq:cbcb}
\end{equation}

Because $\Quot$ is diagonally similar to a $\{0,1\}$-matrix $\ZeroOne$, we have 
\begin{equation}
	\Quot \hadamard \Quot\tran = \ZeroOne.
	\label{eq:quotQuot}
\end{equation}
We use this identity in the following to determine the input and output gains. By substituting  \eqref{eq:diagSimInverseSISO} and \eqref{eq:cbcb} in $\Quot \hadamard \Quot\tran$, we derive
\begin{equation}
\begin{split}
	\Quot \hadamard \Quot\tran \hadamard \Fbm \hadamard \Fbm\tran &= \inv{\Fbm} \hadamard \Fbm\itran + \\ &
	\inv{\Fbm} \hadamard \frac{\Sim{\Ingain} \Sim{\Outgain}}{\directgain} + \p*{\inv{\Fbm} \hadamard \frac{\Sim{\Ingain} \Sim{\Outgain}}{\directgain}}\tran +  \FbmDiag \FbmDiag\tran .
\end{split}
\label{eq:quotQuot2}
\end{equation}
\schlecsn{Because of \eqref{eq:diagSimInverseSISO}, $\Quot \hadamard \Quot\tran \hadamard \Fbm \hadamard \Fbm\tran = \Fbm \hadamard \Fbm\tran$ and can be simplified in \eqref{eq:quotQuot2}.}
By substituting \eqref{eq:quotQuot} into \eqref{eq:quotQuot2} and by sorting the terms we can write more concisely, 
\begin{equation}
	\inv{\Fbm} \hadamard \Sim{\Ingain} \Sim{\Outgain} + {\Fbm}\itran \hadamard \Sim{\Outgain}\tran \Sim{\Ingain}\tran = \rightHand,
\end{equation}
where 
\begin{equation}
	\rightHand = \directgain (\Fbm \hadamard \Fbm\tran - \inv{\Fbm} \hadamard \Fbm\itran -  \FbmDiag \FbmDiag\tran).
	\label{eq:rightHand}
\end{equation}

By Hadamard multiplying the equation with $\Sim{\Ingain} \Sim{\Outgain}$ and substituting \eqref{eq:cbcb}, we get
\begin{equation}
	\inv{\Fbm} \hadamard \p*{\Sim{\Ingain} \Sim{\Outgain}}^{\hadamard 2} - \rightHand \hadamard \Sim{\Ingain} \Sim{\Outgain} + {\Fbm}\itran \hadamard \directgain^2 \FbmDiag \FbmDiag\tran  = \zeros,
	\label{eq:quadraticEquation}
\end{equation}
where $\cdot^{\hadamard 2}$ denotes the element-wise square. Each matrix entry in \eqref{eq:quadraticEquation} is a quadratic equation and can be solved independently. From the two possible solutions for each matrix entry, one is selected such that the solution matrix is of rank 1. From \eqref{eq:recoverP}, 
\begin{equation}
	- \Dsim \Sim{\Ingain} \directgain = - \Ingain \directgain = \Fbm \Sim{\Outgain}\tran
	\label{eq:recoverIngain}
\end{equation}
such that
\begin{equation}
	\diag \Dsim  = - \p*{ \Fbm \Sim{\Outgain}\tran } \hadaQuot \p*{\Sim{\Ingain} \directgain}
	\label{eq:recoverDsim}
\end{equation}
we can recover $\Dsim$ and therefore $\Ingain$ and $\Outgain$ from $\Sim{\Ingain}$ and $\Sim{\Outgain}$. This concludes the completion algorithms for SISO uniallpass FDNs. \schlecsn{We summarize the computational steps: 
\begin{enumerate}
	\item Compute $\directgain = \pm \det \Fbm$, see \eqref{eq:directFeedback}.
	\item Compute $\FbmDiag$, see \eqref{eq:fbmDiag}.
	\item Compute $\rightHand$, see \eqref{eq:rightHand}.
	\item Solve for $\Sim{\Ingain} \Sim{\Outgain}$, see \eqref{eq:quadraticEquation}.
	\item Compute $\Dsim$, see \eqref{eq:recoverDsim}.
	\item Compute $\Ingain$ and $\Outgain$, see	below \eqref{eq:diagSimInverseSISO}.
\end{enumerate}
A MATLAB implementation is provided in the FDN toolbox \cite{Schlecht.2020.FDNTB}.} In the following section, we study the completion of a special class of feedback matrices.

\section{Homogeneous Decay Allpass FDN}
\label{sec:homogeneous}

\subsection{Homogeneous Decay}
 A typical requirement in artificial reverberation and audio decorrelation is that all modes decay at the same rate, i.e., all system eigenvalues have the same magnitude, i.e., $\abs{\pole_i} = \decayRate$ for $1 \leq i \leq \N$. We refer to this property as homogeneous decay. In FDNs, this can be achieved by delay-proportional absorption in combination with a lossless matrix \cite{Jot:1991tq}. Thus, the feedback matrix is
\begin{equation}
	\Fbm = \Unitary \Decay
	\label{eq:homoMatrix}
\end{equation}
with unilossless matrix $\Unitary$, diagonal matrix $\Decay$ with \cite{Schlecht:2017jt}
\begin{equation}
	\decay_{ii} = \decayRate^{\delay_i} \textrm{ for } 1 \leq i \leq \matSize.
	\label{eq:decayGains} 
\end{equation}
For $0 < \decayRate < 1$, the singular values of $\Fbm$ are then $\decay_{11}, \dots, \decay_{\matSize\matSize}$ and the eigenvalues of $\Fbm$ have moduli less than 1. From Section~\ref{sec:orthogonalCompletion}, any such feedback matrix can be completed into a full MIMO uniallpass FDN. Note that this is a significant extension to Poletti's design \cite{Poletti:1995tq} as shown below in Section~\ref{sec:application}. In \eqref{eq:homoMatrix}, $\Unitary$ can be a unilossless triangular matrix, i.e., with a diagonal of ones \cite{Schlecht:2017jt}. In Section~\ref{sec:application}, we revisit this structure for series allpasses. In the following, we focus on the more intricate case of orthogonal $\Unitary$.

\subsection{SISO FDN}
\label{sec:homogeneousSISO}
\schlecsn{We construct homogeneous decay uniallpass FDNs for SISO, i.e., $0 < \decayRate < 1$ in \eqref{eq:decayGains}}. We substitute \eqref{eq:homoMatrix} into \eqref{eq:lyapunovB},
\begin{equation}
	\Dsim - \Unitary \Decay \Dsim \Decay \Unitary\tran = \Ingain \Ingain\tran.
\end{equation}
We right-multiply with $\Unitary$ and substitute $\DsimQ = \Decay^2 \Dsim$ and $\hat{ \Ingain }  = \Unitary\tran \Ingain$ such that
\begin{equation}
	\Dsim \Unitary - \Unitary \DsimQ = \Ingain \IngainU\tran,
	\label{eq:displacement}
\end{equation}
which is called a displacement equation \cite{Fasino:2002}. In the following, we denote the diagonal entries of a diagonal matrix $\Dsim$ with a single index, e.g., $\dsim_{ii} = \dsim_i$. The solution of the displacement equation \eqref{eq:displacement} is the Cauchy-like matrix \cite{Fasino:2002}
\begin{equation}
\begin{aligned}
	\Unitary &= \Ingain \IngainU\tran \hadamard \Cauchy \\
		&= \diag{\Ingain} \Cauchy \diag{\IngainU},
	\label{eq:CauchylikeCondition}
\end{aligned}
\end{equation}
where the $\matSize \times \matSize$ Cauchy matrix $\Cauchy$ has elements
\begin{equation}
	\cauchy_{ij} = \frac{1}{\dsim_i - \dsimQ_j }.
\end{equation}
Then, the inverse of the Cauchy matrix is given by \cite{Schechter:1959}
\begin{equation}
	\inv{\Cauchy} = \diag{\vec{\alpha}} \Cauchy\tran \diag{\vec{\beta}},
	\label{eq:inverseCauchy}
\end{equation}
where the elements of $\matSize \times 1$ vectors $\vec{\alpha}$ and $\vec{\beta}$ are
\begin{equation}
	\alpha_i = -\frac{\mathcal{A}(\dsimQ_i) }{ \mathcal{B}\der(\dsimQ_i) } \textrm{ and }
	\beta_i = \frac{\mathcal{B}(\dsim_i) }{ \mathcal{A}\der(\dsim_i) }
	\label{eq:alphaBeta}
\end{equation}
and
\begin{equation}
	\mathcal{A}(x) = \prod_{k = 1}^\matSize (x - \dsim_k)  \textrm{ and }
	\mathcal{B}(x) = \prod_{k = 1}^\matSize (x - \dsimQ_k) ,
	\label{eq:ABpoly}
\end{equation}
where $\cdot\der$ denotes the derivative with respect to $x$. Thus, the diagonal elements of $\Dsim$ and $\DsimQ$ are the zeros of the polynomials $\mathcal{A}(x)$ and $\mathcal{B}(x)$. Thus, taking the inverse in \eqref{eq:CauchylikeCondition} and substituting \eqref{eq:inverseCauchy}, yields
\begin{equation}
\begin{aligned}
	\inv{\Unitary} &= \inv{\diag{\IngainU}} \inv{\Cauchy} \inv{\diag{\Ingain}} \\	
		&=  \inv{\diag{\IngainU}} \diag{\vec{\alpha}} \Cauchy\tran \diag{\vec{\beta}} \inv{\diag{\Ingain}}.
\end{aligned}
\end{equation}
Because $\Unitary\tran = \inv{\Unitary}$, we have
\begin{equation}
	\diag{\IngainU}^2 = \diag{\vec{\alpha}} \textrm{ and } \diag{\Ingain}^2 = \diag{\vec{\beta}}.
	\label{eq:admissiableAlphaBeta}
\end{equation}
Therefore, $\vec{\alpha}$ and $\vec{\beta}$ need to be positive. And the unitary matrix is given by 
\begin{equation}
	\unitary_{ij} = \frac{\sqrt{\beta_i \alpha_j}}{ \dsim_i - \dsimQ_j}.
	\label{eq:unitaryAlpha}
\end{equation}

\subsection{Admissible Parameters}
 Firstly, we give a sufficient condition for $\Dsim$ and $\DsimQ$ to be admissible, i.e., $\vec{\alpha}$ and $\vec{\beta}$ in \eqref{eq:admissiableAlphaBeta} are positive. Secondly, for a given decay gains $\Decay$, we determine similarity matrix $\Dsim$ such that $\Dsim$ and $\DsimQ = \Decay^2 \Dsim$ are admissible. The choice of $\Dsim$ is effectively a parametrization of $\Unitary$ in \eqref{eq:unitaryAlpha} such that a uniallpass FDN exists with $\Fbm = \Unitary \Decay$.
 

We show that the following choice of $\Dsim$ and $\DsimQ$ is admissible, i.e., 
\begin{equation}
	\dsimQ_1 < \dsim_1 < \dsimQ_2 < \dsim_2 < \dots < \dsimQ_\matSize < \dsim_\matSize.
	\label{eq:interleavedP}
\end{equation}
Because of \eqref{eq:ABpoly}, we say that the zeros of $\mathcal{A}(x)$ and $\mathcal{B}(x)$ are strictly interlaced. 

With Rolle's theorem, the zeros of the derivatives $\mathcal{A}\der(x)$ and $\mathcal{B}\der(x)$ are strictly interleaving the zeros of $\mathcal{A}(x)$ and $\mathcal{B}(x)$, respectively \cite{Rahman:2002tq}. Thus, with \eqref{eq:interleavedP}, we have that 
\begin{equation}
	\sign \mathcal{A}'(\dsim_i) = \sign \mathcal{B}'(\dsimQ_i) = (-1)^{\matSize - i},
\end{equation}
where $\sign$ denotes the sign operator. Similarly, because of \eqref{eq:interleavedP}, we have 
\begin{equation}
	\sign \mathcal{A}(\dsimQ_i) = (-1)^{\matSize+1-i} \textrm{ and } \sign \mathcal{B}(\dsim_i) = (-1)^{\matSize-i}.
\end{equation}
Therefore, with \eqref{eq:alphaBeta}, we have
\begin{equation*}
	\sign \alpha_i = -\frac{(-1)^{\matSize+1-i} }{ (-1)^{\matSize - i} } = 1 \textrm{ and }
	\sign \beta_i = \frac{(-1)^{\matSize-i} }{ (-1)^{\matSize - i} } = 1
\end{equation*}
such that $\Dsim$ and $\DsimQ$ in \eqref{eq:interleavedP} yield an admissible solution to \eqref{eq:admissiableAlphaBeta}.

Thus, for a given decay gain $\Decay$, we choose $\Dsim$ such that $\Dsim$ strictly interleaves $\DsimQ = \Dsim \Decay^2$. With \eqref{eq:interleavedP}, we have
\begin{equation}
	0 < {\frac{\dsim_{i-1}}{\dsim_{i}}} < \decay_{i}^2 \textrm{ for } 2 \leq i \leq \matSize 
	\label{eq:homoChoice}
\end{equation}
and $\dsim_1$ and $\decay_1 < 1$ are unconstrained. Note, that $\Decay$ does not need to be sorted in any way. As we have not constrained the decay gains $\Decay$, we have shown that there exists SISO uniallpass FDNs with homogeneous decays for any delay $\Delay$ and any decay rate $0 < \decayRate < 1$. The similarity matrix $\Dsim$ acts as an additional design parameter within the constraints of \eqref{eq:homoChoice}.


\begin{figure}[!tb]
\centering
\begin{subfigure}[b]{\columnwidth}
	\includegraphics[width=\columnwidth]{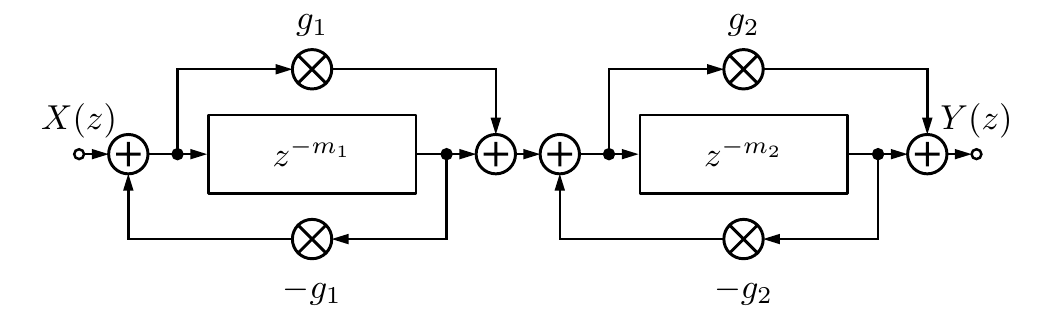}
	\caption{Block diagram of a series of two Schroeder allpasses.}
	\label{fig:allpassCombFilterSeriesBlock}
\end{subfigure}

\vspace{0.4cm}

\begin{subfigure}[b]{0.8\columnwidth}
	\includegraphics[width=\columnwidth]{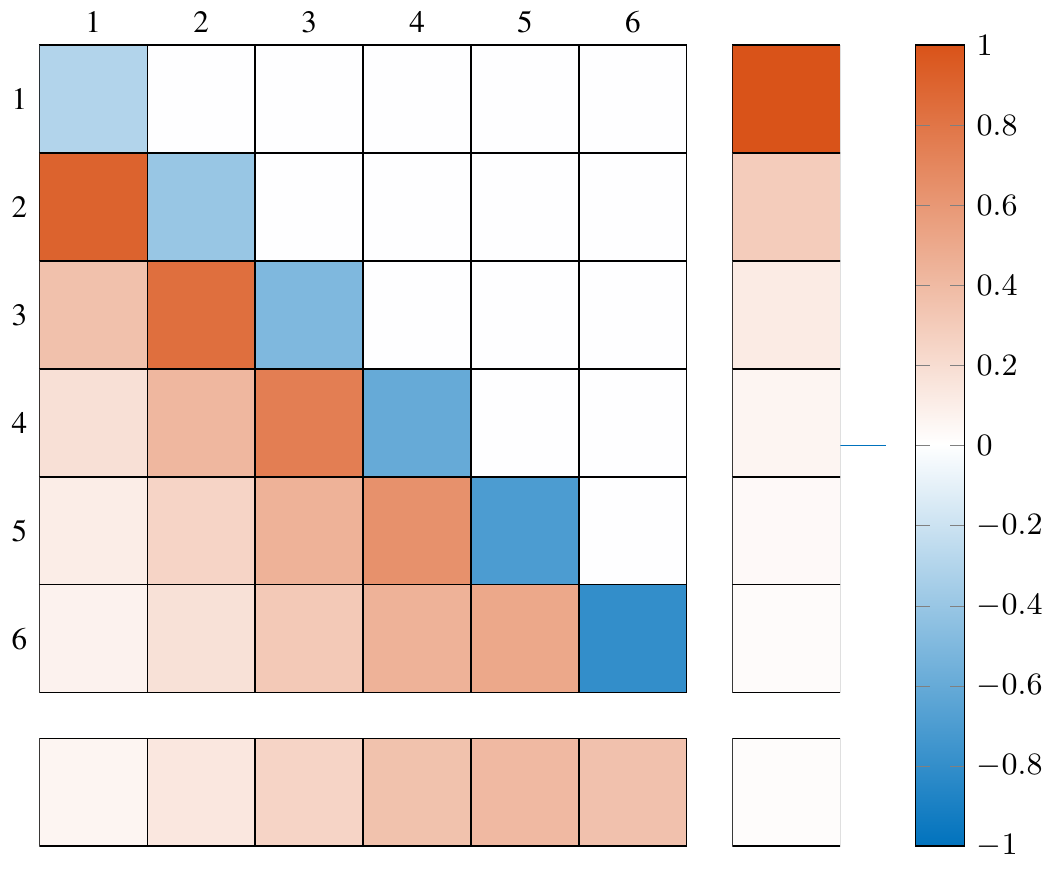}
	\caption{System matrix $\Sys$ in \eqref{eq:systemMatrix} of a series of six Schroeder allpasses with matrix blocks $\Fbm$, $\Ingain$, $\Outgain$, and $\directgain$ as in \eqref{eq:seriesAllpassSS}. The gains $\gain_1, \dots, \gain_6$ are [0.3, 0.4, 0.5, 0.6, 0.7, 0.8].}
	\label{fig:allpassCombFilterSeriesMatrix}
\end{subfigure}
\caption{SISO uniallpass filter based on a series of Schroeder allpasses \cite{Schroeder:1960js}.}
\label{fig:allpassCombFilterSeries}
\end{figure}

\section{Application}
\label{sec:application}

In this section, we show that three well-known delay-based allpass structures are uniallpass FDNs: Schroeder's series allpass \cite{Schroeder:1961ke}, Gardner's nested allpasses \cite{Gardner:1992ks}, and Poletti's unitary reverberator \cite{Poletti:1995tq}. Reviewing these previous designs also reveals their limited design space and demonstrates the significant extension introduced by Theorem~\ref{th:allpassFDN}. We conclude this section by presenting a complete numerical example of a SISO uniallpass FDN with homogeneous decay. \schlecsn{The diagonal similarity matrix $\Dsim$ was computed by solving the discrete Lyapunov equation \eqref{eq:lyapunovB} either numerically or symbolically.}

\subsection{SISO - Series Schroeder Allpass}
%
The Schroeder series allpass of $\matSize$ feedforward-feedback delay allpasses is
\begin{equation}
	\Tf{Schroeder}(z) = \prod_{i=1}^\matSize \frac{ \gain_i + z^{-\delay_i}}{1 + \gain_i z^{-\delay_i} },
\end{equation}
where $\gain_i$ and $\delay_i$ denote the feedforward-feedback gains and delay lengths, respectively. Fig.~\ref{fig:allpassCombFilterSeriesBlock} shows an instance for $\matSize = 2$. The corresponding state space realization is \cite{Schlecht:2012uw}
\begin{subequations}
\label{eq:seriesAllpassSS}
\begin{align}
	\fbm_{ij} &= \begin{cases}
		-g_i & \textrm{ for } i = j \\
		0 & \textrm{ for } i < j \\
		\p*{1 - g_j^2} \prod_{k = j+1}^{i-1} g_k & \textrm{ for } i > j
	\end{cases}, \\
	\ingain_i &= \prod_{k = 1}^{i-1} g_k ,  \\
	\outgain_i &= \p*{1 - g_i^2} \prod_{k = i+1}^\matSize g_k ,  \\
	\directgain &= \prod_{k = 1}^\matSize g_k,
\end{align}
\end{subequations}
and the similarity transform $\Dsim$ in \eqref{eq:allpassFDNThcondition} is a diagonal matrix with diagonal elements 
\begin{equation}
	\dsim_{ii} = \frac{1}{1 - g_i^2}.
\end{equation}
Fig.~\ref{fig:allpassCombFilterSeriesMatrix} depicts the system matrix $\Sys$ of the Schroeder series allpass for $\matSize = 6$. The feedback matrix $\Fbm$ is triangular with gains $\gain_1, \dots, \gain_\matSize$ on the main diagonal. The remaining gains $\Ingain$, $\Outgain$, and $\directgain$ are determined by the gains $\gain_i$ as well. Therefore, there exists $\Fbm = \Unitary \Decay$ with triangular unilossless $\Unitary$ and $\Decay = \diag{[\gain_1, \dots, \gain_\matSize]}$ such that the Schroeder series allpass can have homogeneous decay, see \eqref{eq:homoMatrix}.


\begin{figure}[!tb]
\centering
\begin{subfigure}[b]{\columnwidth}
	\includegraphics[width=\columnwidth]{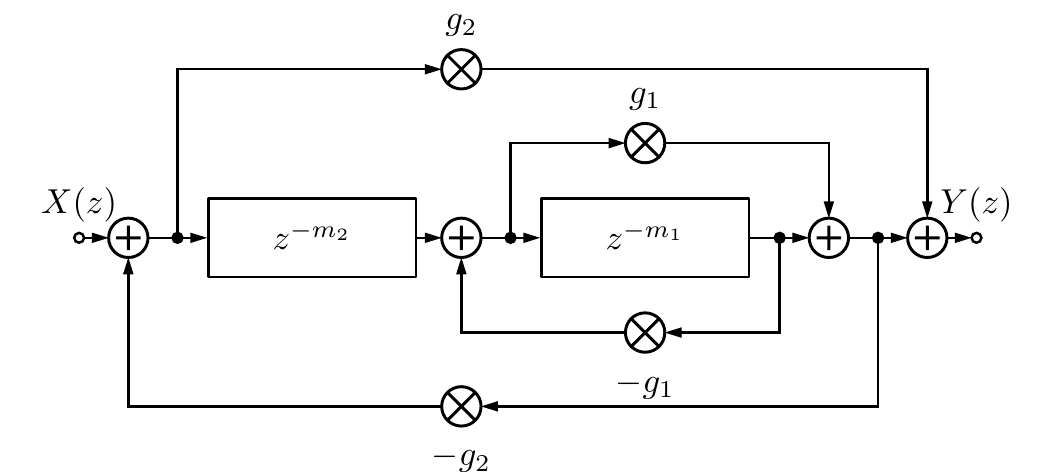}
	\caption{Block diagram of two nested Schroeder allpasses.}
	\label{fig:nestedAllpassBlock}
\end{subfigure}

\vspace{0.4cm}

\begin{subfigure}[b]{0.8\columnwidth}
	\includegraphics[width=\columnwidth]{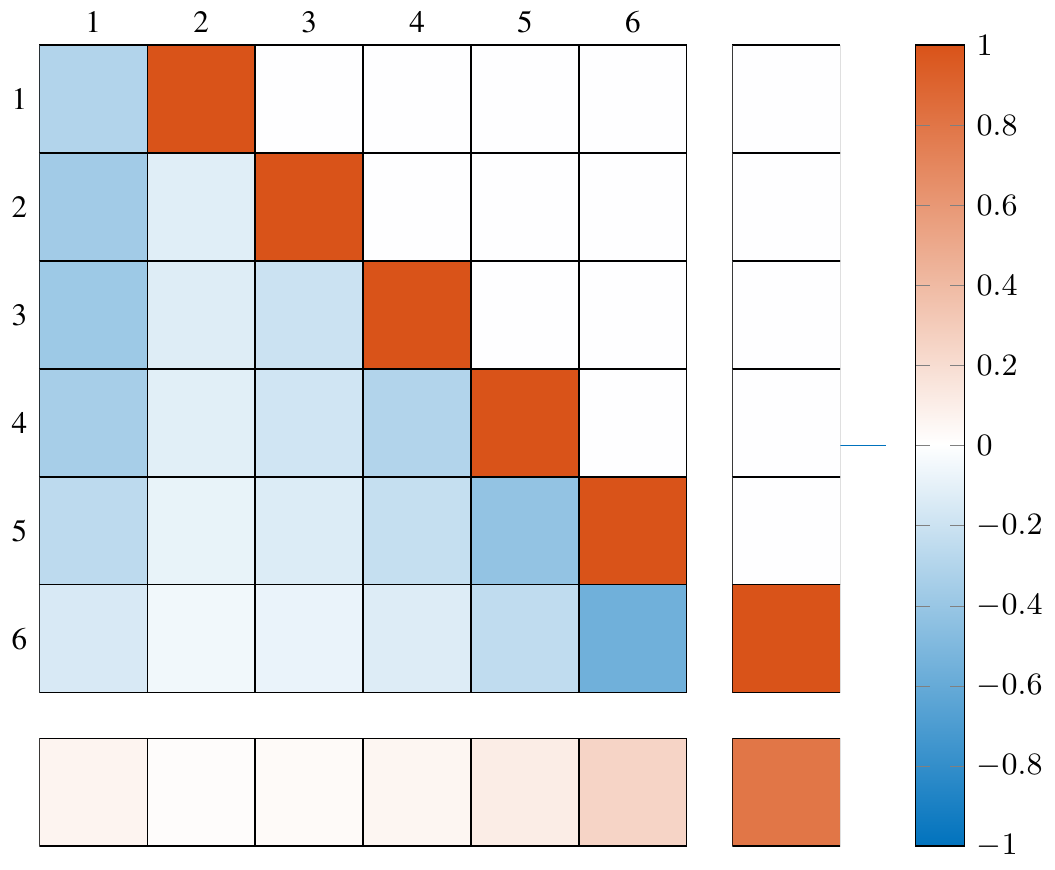}
	\caption{System matrix $\Sys$ in \eqref{eq:systemMatrix} of six nested Schroeder allpasses with matrix blocks $\Fbm$, $\Ingain$, $\Outgain$, and $\directgain$ as in \eqref{eq:nestedAllpassSS}. The gains $\gain_1, \dots, \gain_6$ are [0.3, 0.4, 0.5, 0.6, 0.7, 0.8].}
	\label{fig:nestedAllpassMatrix}
\end{subfigure}
\caption{SISO uniallpass filter based on nested Schroeder allpasses proposed by Gardner \cite{Gardner:1992ks}.}
\label{fig:nestedAllpass}
\end{figure}

\subsection{SISO - Nested Allpass}
The nested allpass as proposed by Gardner \cite{Gardner:1992ks} is a recursive nesting of Schroeder allpasses, i.e., 
\begin{equation}
	\Tf{Gardner} = \Tf{\matSize}(z), 
\end{equation}
where $\Tf{1}(z) = \frac{ \gain_1 + z^{-\delay_1}}{1 + \gain_1 z^{-\delay_1} }$ and for $k > 1$
\begin{equation}
	\Tf{k}(z) = \frac{ \gain_k + z^{-\delay_k} \Tf{k-1}(z)}{1 + \gain_k z^{-\delay_k} \Tf{k-1}(z) }.
\end{equation}
Figure~\ref{fig:nestedAllpassBlock} shows an instance of the nested allpass for $\matSize = 2$. The corresponding state space realization is
\begin{subequations}
\label{eq:nestedAllpassSS}
\begin{align}
	\fbm_{ij} &= \begin{cases}
		- \gain_i \shiftGain_i & \textrm{ for } i = j \\
		1 & \textrm{ for } i = j - 1 \\
		0 & \textrm{ for } i < j - 1 \\
		- \gain_i \shiftGain_j \prod_{k = j}^{i-1} {1 - g_k^2} & \textrm{ for } i > j
	\end{cases}, \\
	\ingain_i &= \begin{cases} 1 \qquad  & \textrm{ for } i = \matSize \\ 0 \qquad &\textrm{ otherwise }\end{cases},  \\
	\outgain_i &= \shiftGain_i \prod_{k = i}^{\matSize} {1 - g_k^2} ,  \\
	\directgain &= \gain_\matSize,
\end{align}
\end{subequations}
where $\shiftGain_1 = 1$ and $\shiftGain_j = \gain_{j-1}$ for $2 \leq j \leq \matSize$. The similarity transform $\Dsim$ in \eqref{eq:allpassFDNThcondition} is a diagonal matrix with diagonal elements 
\begin{equation}
	\dsim_{ii} = \frac{-1}{\prod_{k=i}^\matSize 1 - g_k^2}.
\end{equation}
Fig.~\ref{fig:nestedAllpassMatrix} depicts the system matrix $\Sys$ of the nested allpasses for $\matSize = 6$. The feedback matrix $\Fbm$ is Hessenberg and all gains including $\Ingain$, $\Outgain$, and $\directgain$ are determined by the gains $\gain_i$. Series allpasses are strongly related to nested allpasses as they share the same parameter space, however, differ in the structure. Interestingly, the feedback matrix of nested allpasses induce a much more complex decay pattern than the series allpass counterpart. 


\begin{figure}[!tb]
\centering
\begin{subfigure}[b]{\columnwidth}
	\includegraphics[width=\columnwidth]{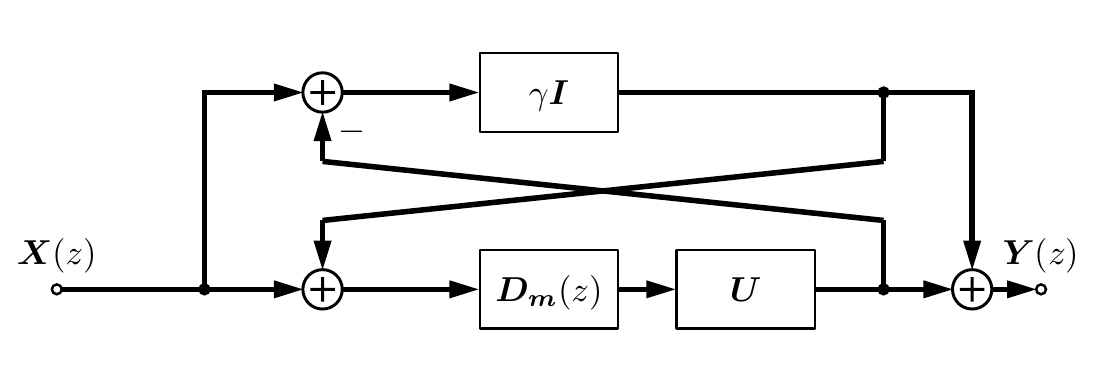}
	\caption{Block diagram of Poletti's unitary reverberator.}
	\label{fig:polettiFDNBlock}
\end{subfigure}

\vspace{0.4cm}

\begin{subfigure}[b]{0.8\columnwidth}
	\includegraphics[width=\columnwidth]{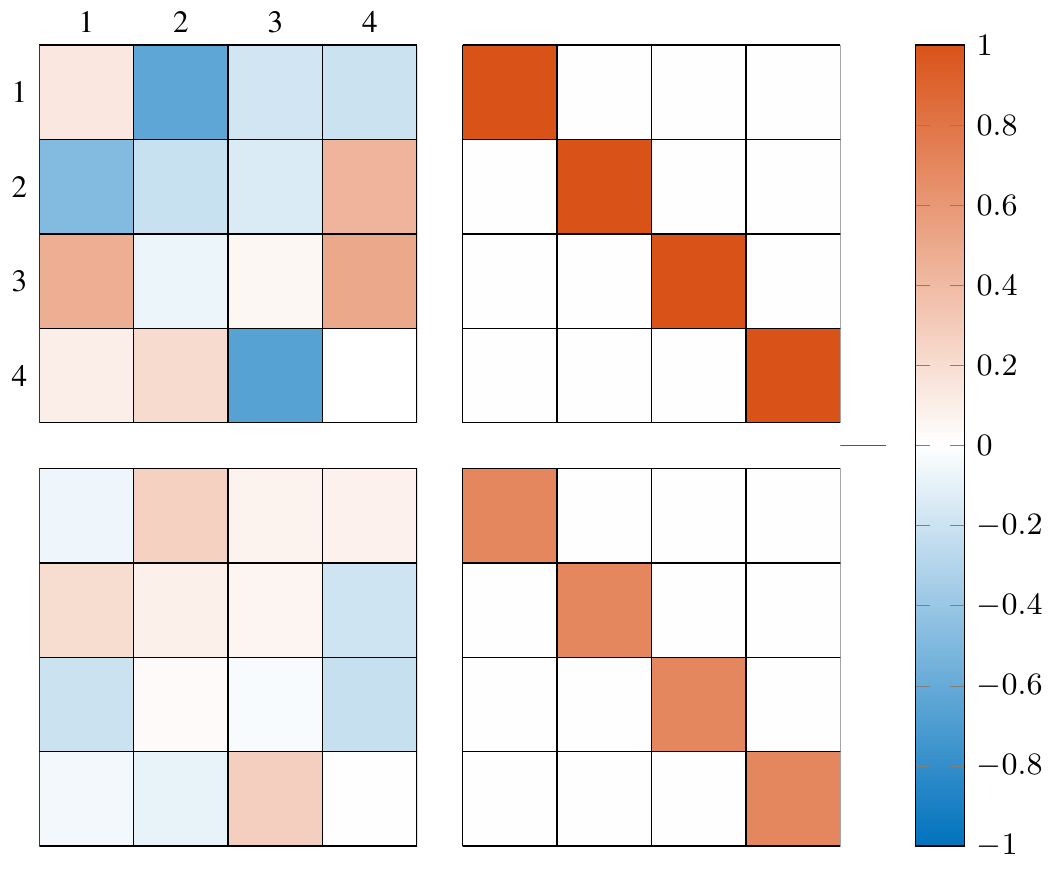}
	\caption{System matrix $\Sys$ in \eqref{eq:systemMatrix} of Poletti's unitary reverberator with matrix blocks $\Fbm$, $\Ingains$, $\Outgains$, and $\Directgains$ as in \eqref{eq:polettiAllpassSS}. The loop gain is $\loopGain = 0.7$ and $\Unitary$ is a random orthogonal $4 \times 4$ matrix.}
	\label{fig:polettiFDNMatrix}
\end{subfigure}
\caption{MIMO uniallpass feedback delay network (FDN) with feedback matrix $\Fbm$ and loop gain $\loopGain$ proposed by Poletti \cite{Poletti:1995tq}. Thick lines indicate multiple channel.}
\label{fig:polettiFDN}
\end{figure}

\subsection{MIMO - Poletti Reverberator}
The MIMO reverberator proposed by Poletti \cite{Poletti:1995tq} is a direct multichannel generalization of the Schroeder allpass structure in lattice form, see Fig.~\ref{fig:polettiFDNBlock}. The loop gain $\loopGain$ controls the decay rate of the response tail such that
\begin{equation}
	\Tf{Poletti}(z) = \p*{ \loopGain \eye + \Unitary \stdDelay } \invp{  \eye + \loopGain \Unitary \stdDelay }.
\end{equation}
The state space realization is
\begin{subequations}
\label{eq:polettiAllpassSS}
\begin{align}
	\Fbm &= - \loopGain \Unitary, \\
	\Ingains &= (1 + \loopGain) \eye, \\
	\Outgains &= (1 - \loopGain) \Unitary, \\
	\Directgains &= \loopGain \eye,
\end{align}
\end{subequations}
and the similarity matrix in \eqref{eq:allpassFDNThcondition} is 
\begin{equation}
	\Dsim = \frac{1+\loopGain}{\sqrt{1 - \loopGain^2}} \eye.
\end{equation}
%
Fig.~\ref{fig:polettiFDNMatrix} depicts the system matrix $\Sys$ of Poletti's allpass for $\matSize = 4$ and $\numIO = 4$. While the direct and input gains, $\Directgains$ and $\Ingains$, respectively, are scaled identity matrices, the feedback matrix $\Fbm$ and output gains $\Outgains$ are scaled versions of the unitary matrix $\Unitary$. Interestingly, Poletti's allpass has homogeneous decay only for equal delays, \schlecsn{which is usually undesirable as the time-domain response is non-zero only at integer multiples of the delays and can therefore never become dense \cite{Schlecht:2017il}}.

\begin{figure}[!tb]
\centering
\includegraphics[width=0.8\columnwidth]{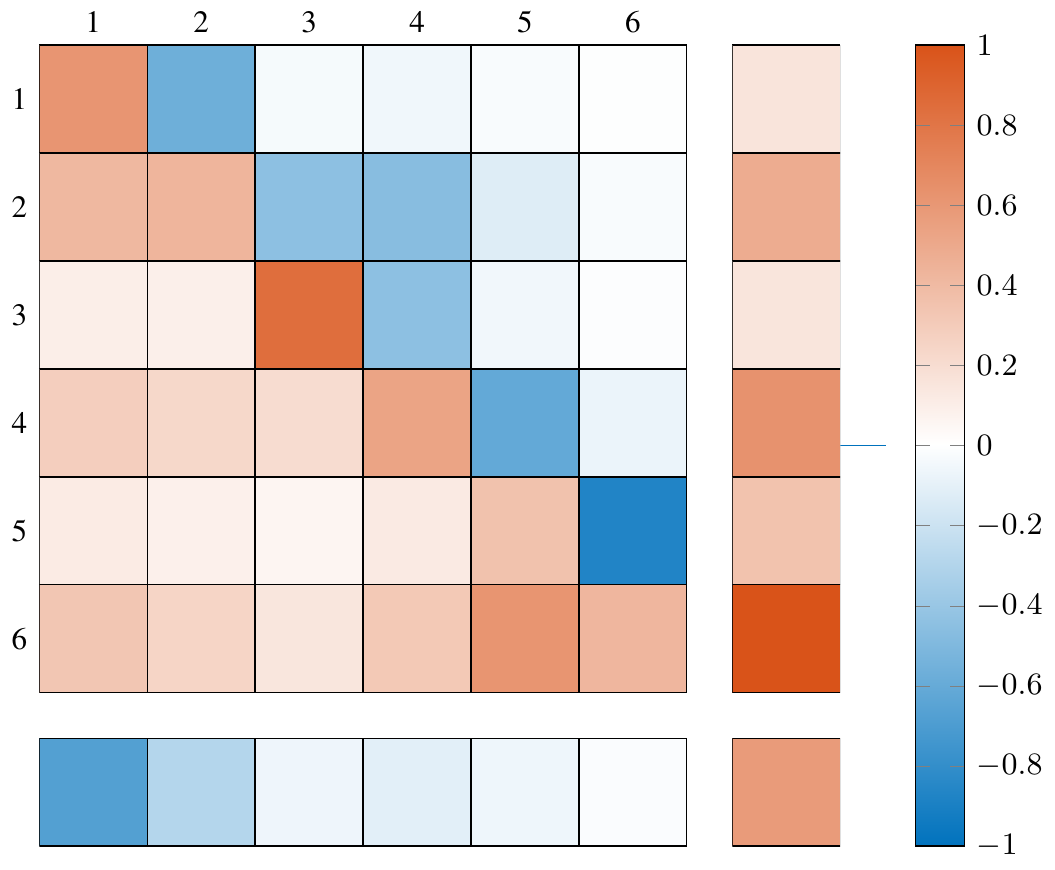}
\caption{System matrix $\Sys$ in \eqref{eq:systemMatrix} of the proposed SISO uniallpass FDN with homogeneous decay with matrix blocks $\Fbm$, $\Ingain$, $\Outgain$, and $\directgain$ as in Section~\ref{sec:homoAllpassSS}. The design parameters are $\matSize = 6$, $\decayRate = 0.99$ and $\Delay = [13,22,1,10,5,3]$.}
\label{fig:HomogeneousSISO}
\end{figure}

\subsection{SISO Homogeneous Decay Uniallpass FDN}
\label{sec:homoAllpassSS}
We give a numerical example of a SISO allpass FDN with homogeneous decay following the procedure in Section~\ref{sec:homogeneous}. Let $\matSize = 6$, $\decayRate = 0.99$ and $\Delay = [13,22,1,10,5,3]$. Then with \eqref{eq:decayGains}, we have
\begin{equation*}
	\Decay = \diag{ \begin{bmatrix} 
0.878 & 0.802 & 0.990 & 0.904 & 0.951 &  0.970 \\ 
\end{bmatrix} }
\end{equation*}
and from \eqref{eq:homoChoice} we can choose
\begin{equation*}
	\Dsim = \diag{ \begin{bmatrix} 
1.000 & 1.808 & 2.096 & 2.743 & 3.413 &  3.662 \\ 
\end{bmatrix} }.
\end{equation*}
From \eqref{eq:unitaryAlpha}, we can then compute
\begin{equation*}
	\Unitary = \begin{bmatrix}
0.702 & -0.708 & -0.034 & -0.059 & -0.027 &  -0.006 \\ 
0.474 & 0.540 & -0.448 & -0.515 & -0.132 &  -0.026 \\ 
0.120 & 0.120 & 0.853 & -0.491 & -0.055 &  -0.010 \\ 
0.327 & 0.289 & 0.210 & 0.589 & -0.642 &  -0.078 \\ 
0.136 & 0.114 & 0.059 & 0.141 & 0.378 &  -0.896 \\ 
0.378 & 0.310 & 0.152 & 0.352 & 0.651 &  0.437 \\ 
	\end{bmatrix} .
\end{equation*}
The feedback matrix results then from \eqref{eq:homoMatrix}, i.e., 
\begin{equation*}
	\Fbm = \begin{bmatrix} 
0.616 & -0.568 & -0.034 & -0.054 & -0.025 &  -0.005 \\ 
0.416 & 0.433 & -0.443 & -0.466 & -0.125 &  -0.025 \\ 
0.105 & 0.097 & 0.844 & -0.444 & -0.052 &  -0.010 \\ 
0.287 & 0.232 & 0.208 & 0.533 & -0.611 &  -0.076 \\ 
0.120 & 0.091 & 0.059 & 0.127 & 0.360 &  -0.869 \\ 
0.332 & 0.249 & 0.151 & 0.318 & 0.619 &  0.424 \\ 
\end{bmatrix}.
\end{equation*}
The remaining input, output and direct gains are determined by solving the completion problem in Section~\ref{sec:generalCompletion}
\begin{align*}
	\Ingain\tran &= \begin{bmatrix} 0.159 & 0.483 & 0.156 & 0.633 & 0.354 &  1.073 \\   \end{bmatrix},  \\
	\Outgain &= -\begin{bmatrix} 0.675 & 0.290 & 0.064 & 0.109 & 0.062 &  0.014 \\   \end{bmatrix},  \\
	\directgain &= 0.581.
\end{align*}
Fig.~\ref{fig:HomogeneousSISO} shows the system matrix for the numerical example. Interestingly, the feedback matrix $\Fbm$ exhibits a triangular-like shape which suggests that the homogeneous decay uniallpass FDN generalizes the triangular and Hessenberg shapes of the series and nested allpasses. 



\section{Conclusion}
In this work, we developed a novel characterization for uniallpass feedback delay networks (FDNs), which are allpass for any choice of delay lengths. Further, we introduced the uniallpass completion, i.e., completing a given feedback matrix to a uniallpass FDN. While the full MIMO case is relatively simple, also a solution to the SISO case was presented. Further, we solved the completion problem for a particular class of feedback matrices, which yields homogeneous decay of the impulse response. We reviewed three previous allpass FDN designs within this novel characterization and an additional numerical example for homogeneous decay uniallpass FDNs.

Future research questions should address application-specific designs of uniallpass FDNs, for instance, in audio signal processing, where additional constraints are required. Further research is also needed for the design of frequency-dependent FDN designs with the allpass property\schlecsn{, i.e., for a filter feedback matrix $\Fbm(z)$. In particular, the homogeneous decay allpass FDN with filter matrix $\Decay(z)$ in \eqref{eq:homoMatrix} has important practical applications for frequency-dependent decay and generalizes the single delay case, i.e., $\matSize = 1$ in \cite{Schlecht:2019sa}.}

\section{Acknowledgement}
The author thanks Prof. Dario Fasino for his insights on orthogonal Cauchy-like matrices in Section~\ref{sec:homogeneousSISO}. Further thanks go to Dr. Maximilian Schäfer and Prof. Vesa Välimäki for proofreading and valuable comments. The author is grateful to the anonymous reviewers for their detailed and thorough comments, which helped improve this manuscript.

\bibliographystyle{IEEEtranNoURL}
\bibliography{Papers}

\end{sloppy}
\end{document}